\newcommand{\tajo}[1]{{\textcolor{blue}{tajo:#1}}}
\DeclareMathOperator*{\argmin}{argmin}
\DeclareMathOperator*{\argmax}{argmax}
\DeclareMathOperator*{\argsort}{argsort}
\newtheorem{theorem}{Theorem}
\newcommand{\blind}{1}
\begin{document}

\def\spacingset#1{\renewcommand{\baselinestretch}%
{#1}\small\normalsize} \spacingset{1}


\if1\blind
{
  \title{\bf A Fast Iterative Robust Principal Component Analysis Method}
  \author{Timbwaoga Aime Judicael Ouermi\thanks{
    The authors gratefully acknowledge \textit{This work was partially supported by the Intel OneAPI CoE, the Intel Graphics and Visualization Institutes of XeLLENCE, and the DOE Ab-initio Visualization for Innovative Science (AIVIS) grant 2428225.}}\hspace{.2cm}\\
    Jixian Li \\
    and \\
    Chris R. Johnson\\
    Scientific Computing and Imaging Institute, University of Utah}
  \maketitle
} \fi

\if1\blind
{
  \maketitle
  \medskip
} \fi

\bigskip
\begin{abstract}
Principal Component Analysis (PCA) is widely used for dimensionality reduction and data analysis. However, PCA results are adversely affected by outliers often observed in real-world data. Existing robust PCA methods are often computationally expensive or exhibit limited robustness.
In this work, we introduce a Fast Iterative Robust (FIR) PCA method by efficiently estimating the inliers center location and covariance. Our approach leverages Incremental PCA (IPCA) to iteratively construct a subset of data points that ensures improved location and covariance estimation that effectively mitigates the influence of outliers on PCA projection. We demonstrate that our method achieves competitive accuracy and performance compared to existing robust location and covariance methods while offering improved robustness to outlier contamination. We utilize simulated and real-world datasets to evaluate and demonstrate the efficacy of our approach in identifying and preserving underlying data structures in the presence of contamination. 
\end{abstract}

\noindent%
{\it Keywords:} robust PCA, high-dimensional data, outlier detection, incremental PCA, data analysis
\vfill

\newpage
\spacingset{2} 
\section{Introduction}
\label{sec:introduction}
Principal component analysis (PCA)~\citep{Jolliffe2002pca} is a widely used technique for dimension reduction and data analysis. It provides a powerful framework for simplifying complex high-dimensional data, and uncovering major trends and patterns within the data. Despite its widespread application across many fields including cybersecurity~\citep{Priyanga2020,Parizad2022}, Machine Learning ~\citep{swathi2020overview,bharadiya2023tutorial}, and bio-science~\citep{Ma2011, Hubert_2004}, PCA is highly sensitive to outliers. Real-world datasets are often contaminated by outliers, which can distort statistical measures such as multivariate location, covariance, and the principal component scores derived from classical PCA (CPCA). This sensitivity can result in inaccurate representations and interpretations of the underlying structure of the data.

To mitigate the impact of outliers on PCA, several methods estimate the center \textit{location} and \textit{covariance} of inlier points\citep{Zhang02012024,Zhang_2024,hubert_fast_2002,hubert_robpca_2005,Hubert01072012}. The quality of robust PCA depends on how well the location and covariance represent the data characteristics.

~\cite{Rousseeuw01121984} introduced an approach based on finding the minimum covariance determinant (MCD). The MCD method consists of finding a subset of the data of user-specific size for which the covariance matrix has the smallest determinant, thereby ensuring that the influence of outliers is minimized. \cite{hubert_fast_2002,hubert_robpca_2005} used the accelerated and improved FastMCD~\citep{rousseeuw_fast_1999} to construct a robust PCA (RobPCA).  The FastMCD is improved to a deterministic MCD (detMCD) in ~\cite{Hubert01072012}, and later parallelized in ~\cite{DEKETELAERE2020103957} to improve its computational performance. A regularized and weighted regularized MCD is presented in ~\cite{boudt2020minimum} and ~\cite{kalina2022minimum}, respectively. ~\cite{schreurs2021outlier} developed a kernel-regularized MCD for non-elliptical data. These MCD-based methods require computationally expensive concentration steps.

In the context of functional data ~\cite{locantore1999robust} proposed a robust PCA method where they project data onto the unit sphere center at the location estimated using the minimum Euclidean norm to reduce the effect of the outliers.
~\cite{Maronna01112002} introduces the orthogonalized Gnanadesikan-Kettenring (OGK) estimator that utilizes a modified version of the robust covariance proposed by Gnanadesikan and Kettenring ~\cite{Gnanadesikan1972} to provide an estimation of location and covariance.
Recently, ~\cite{Zhang02012024} proposed a fast depth based (FDB) approach for estimating the location and covariance matrix for robust PCA. This approach calculates the projection, or $L^{2}$ depths(~\cite{Zuo2000}), and selects a subset of the data with the largest depth values to construct the location and covariance estimate. The methods mentioned above show limited robustness when applied to datasets with a large percentage of outliers near the main distribution and clustered around a single point, as observed in ~\cite{Hubert01072012,Zhang02012024}

The robustification of PCA to outliers depends on the quality of the location and covariance estimates. In addition, the computational cost is dominated by the search for the sampled subset used for calculating both location and covariance. To address the performance and robustness limitations, we introduce a novel \textbf{F}ast \textbf{I}terative \textbf{R}obust (FIR) location and covariance estimator that is robust to outliers. The corresponding robust PCA (FIR-PCA) constructed from this estimator maintains the robustness against outlier contamination, thereby ensuring reliable dimensionality reduction and enhanced robustness in the presence of anomalous data points. 

The key contributions of the work include: (1) A novel and fast method for location, covariance, and PCA projection estimation that is robust to outlier contamination. The proposed method leverages projection depth~\citep{Zuo2000} and incremental PCA (IPCA)~\citep{ross2008incremental} to build a fast approximation algorithm that minimizes the influence of outliers. (2) A theoretical analysis of orthogonal equivariance and permutation invariance for the FIR method. (3) A detailed analysis and evaluation of the novel FIR and FIR-PCA methods applied to several simulated and real-world examples.


%
\section{Methods}
\label{sec:methods}
Our method aims to efficiently provide accurate location and covariance estimations that enables the construction of a PCA framework that is robust against various types of outlier-contaminated datasets. We utilize projection depths~\citep{Zuo2000} and incremental PCA (IPCA)~\citep{ross2008incremental} to build the FIR location and covariance estimation. We can then obtain a robust PCA using the estimated location and covariance. \autoref{fig:overview} demonstrate the FIR procedure described in \autoref{algo:fir-algo}.

\subsection{Incremental PCA}
\label{subsec:IcrementalPCA}
\begin{figure}
    \centering
    \includegraphics[width=\linewidth]{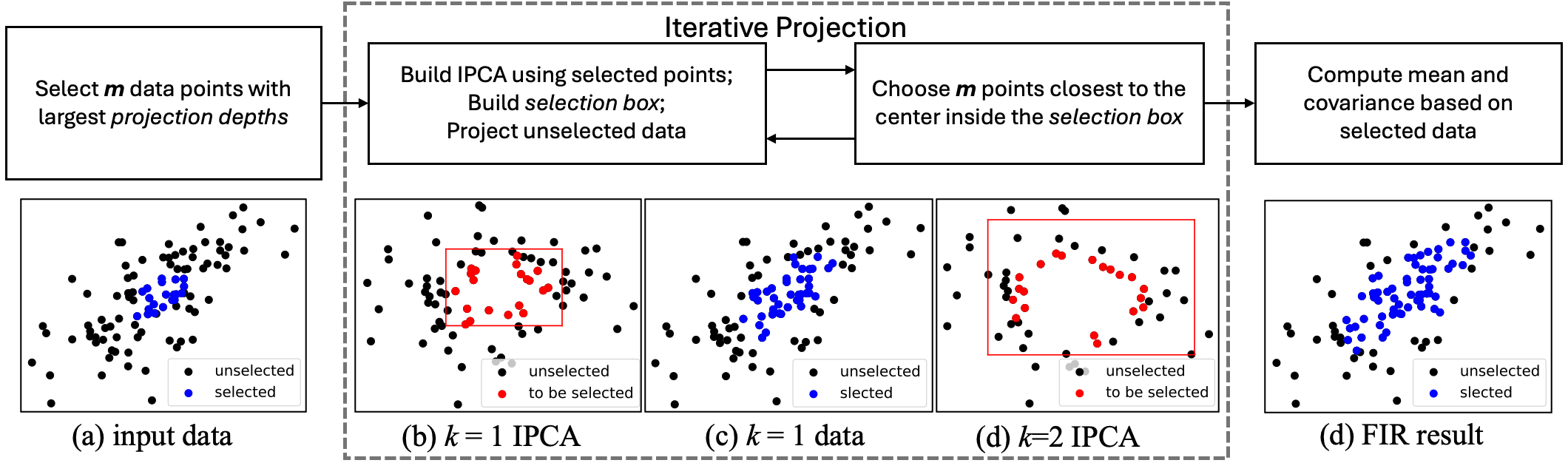}
    \caption{Overview of FIR pipeline. }
    \label{fig:overview}
\end{figure}
Incremental PCA (IPCA) is low dimensional subspace learning designed to process data incrementally~\citep{ross2008incremental,DIAZCHITO2018219,Weng2003,sarwar2002incremental}. Instead of requiring the entire dataset to be loaded into memory, IPCA updates the singular value decomposition (SVD) as new data arrive. This memory-efficient benefit makes the IPCA particularly attractive for large datasets and streaming data ~\citep{Sun2011,Yin2015,YUE2022127538,Zhao2006}. Given a new batch of data, the SVD update in IPCA consist of adjusting the data mean, singular values, and singular vectors to reflect the influence of the newly added data.

FIR iteratively adds points to the selected subset of data as candidate inliers to get estimations of center locations and covariance. IPCA fits this approach seamlessly by allowing incremental insertion of data. FIR uses the IPCA projection to construct a Mahalanobis-like distance characterizing the outlyingness of data to determine the candidate inliers efficiently. In our implementation, we use the IPCA developed by ~\cite{ross2008incremental}, which is available as a part of the \textit{scikit-learn}\footnote{\url{https://scikit-learn.org/stable/api/sklearn.html}} library.

\subsection{Fast Iterative Robust Location and Covariance Estimation}
\label{subsec:robust-localtion-covariance}

\autoref{algo:fir-algo} describes our FIR location and covariance estimation. Let $\mathbf{Z}$ be a matrix of dimension $n \times p$ representing $n$ $p-$dimensional points. The FIR approach starts by choosing $m$ initial indices, which correspond to the points with the largest projection depths. Let $\mathbf{u}$ be a unit vector in $\mathbb{R}^p$, then $\mathbf{u}^T\mathbf{z}$ for a sample $\mathbf{z}$ is a one-dimensional projection of $\mathbf{z}$. The projection depth of a data point $\mathbf{z}$ with respect to a distribution $F$ is:
\begin{equation}\label{eq:projection-depth}
 D_{proj}(\mathbf{z}; F) = \bigg( 1 +\sup_{\|\mathbf{u}^{T}\|=1} \frac{|\mathbf{u}^{T} \mathbf{z} - \textrm{med}(\mathbf{u}^{T}\mathbf{Z})|}{\textrm{mad}(\mathbf{u}^{T}\mathbf{Z})}\bigg),    
\end{equation}
where $F$ is the distribution of $\mathbf{Z}$. $\textrm{med}$ is the univariate median and $\textrm{mad}$ is the univariate median absolute deviation $\textrm{mad}(\mathbf{Y}) = \textrm{med}(\mathbf{Y}-\textrm{med}(\mathbf{Y}))$. In practice, $\mathbf{u}$ is randomly sampled from a $p-$unit sphere.

The remaining points are iteratively chosen from the unselected set of points. At each iteration step the number of selected and unselected points increases and decreases by a factor $m$. The criteria for selecting the $m$ points at the $(k+1)^{th}$ iteration is dependent on the distance of each projected points scaled by the IPCA singular values squared and a selection box constructed based on the projected points onto the two leading singular vectors. The scaled distance in IPCA space is defined as follows: 
\begin{equation}\label{eq:bb-distance}
    d_{i}^{(k)}  = \sum_{j=1}^{\hat{p}} \bigg(\frac{\mathbf{\hat{Z}}_{i,j}^{(k)}}{ s^{(k)}_{j}}\bigg)^2, 
\end{equation}
where $[s^{(k)}_{1}, s^{(k)}_{1}, \cdots,s^{(k)}_{\hat{p}}]$ are the non-zero singular values of the IPCA projection at iteration $k$. The selection box is calculated by expanding the bounding box of the points added at the $k^{th}$ iteration by a factor of 0.5 in all directions.

\begin{equation}\label{eq:bounding-box}
    \mathbf{b}_{j} = \big[\big(\max\big\{\mathbf{\hat{Z}}_{\ell,j}^{(k)}\big\}_{\ell \in H^{(k)}} + \delta_{j}\big), \big(\min \big\{\mathbf{\hat{Z}}_{\ell,j}\big\}_{\ell \in H^{(k)}} - \delta_{j}^{(k)}\big)\big]\quad j=1,2  \\
\end{equation}
where $\max\big\{\mathbf{\hat{Z}}_{\ell,j}^{(k)}\big\}_{\ell \in H^{(k)}}$ and $\min \big\{\mathbf{\hat{Z}}_{\ell,j}\big\}_{\ell \in H^{(k)}}$ are the upper and lower bounds for the two leading singular vector indicated by $j=1,2$ and $0.5 \big|\max\big\{\mathbf{\hat{Z}}_{\ell,j}^{(k)}\big\}_{\ell \in H^{(k)}} -\min \big\{\mathbf{\hat{Z}}_{\ell,j}\big\}_{\ell \in H^{(k)}}\big|$.

\autoref{fig:overview} provides an overview of our algorithm.
\autoref{fig:overview}a shows the input data, with the first selected $m$ points highlighted in blue, corresponding to those with the highest projection depth values. The FIR iterative process, as described in step (3) of \autoref{algo:fir-algo}, is depicted by the dashed gray box and the two arrows connecting the data and IPCA spaces. The initial $m$ points with highest projection depth values are used to build the first IPCA update and project the unselected points, as shown in \autoref{fig:overview}b From these projections, the $m$ points with the smallest distance within the red selection box (marked in red) are chosen, increasing the total number of selected points to $2m$. The IPCA projection is then updated to reflect the newly added points shown in \autoref{fig:overview}c The remaining unselected points are mapped to updated IPCA space to identify the next $m$ points, as shown in \autoref{fig:overview}d This iterative process continues until $0.5(p+n+1) \leq h=\alpha n < n$ points are selected shown in \autoref{fig:overview}d used for calculating the location and covariance. $\alpha$ is a user-specified parameter indicating the percentage of the data to be used for estimating location and covariance. Below, we outline the algorithm for the FIR location and covariance estimation where $\mathbf{Z}$ is the input data, $m$  is the batch size, and $\alpha$ represents the percentage of data used to construct the selected index set $H$, location $\mathbf{\mu}$, and covariance $\mathbf{\Sigma}$.

\begin{algorithm}[H]
    \caption{Fast Iterative Robust(FIR) Location and Covariance Estimation}
    \label{algo:fir-algo}
    \begin{algorithmic}
        \State \textbf{Input}: $\mathbf{Z}$, $m$, $\alpha$. \textbf{Output}: $\mathbf{\mu}$, $\mathbf{\Sigma}$, $H$
        \begin{enumerate}
        \item Calculate and sort  projection depth values using \autoref{eq:projection-depth}
        \itemsep0em 
        \item Choose $m$ indices with largest depth values to form $H^{(1)}= \{\ell_{1}, \ell_{2}, \cdots, \ell_{m}\}$
        \item For $k=1$ to $\lfloor h/m -1\rfloor$
            \begin{enumerate}
                \item Update IPCA projection and selection box using \autoref{eq:bounding-box}
                \item Project the unselected points to the IPCA space
                \item Calculate the scaled distance of the projected points using \autoref{eq:bb-distance}
                \item Choose the $m$ points with the smallest distance inside the selection box and update $H^{(k+1)} = H^{(k)} \cup \{\ell_{mk+1}, \ell_{mk+2}, \cdots, \ell_{mk+m}\}$
            \end{enumerate}
        \item Compute mean $\mathbf{\mu}$ and covariance of $\mathbf{\Sigma}$ using index subset $H$
        \end{enumerate}
    \end{algorithmic}
\end{algorithm}

\subsection{Fast Iterative Robust PCA}
\label{subsec:bounding-box-depth-robust-pca}
We can construct the FIR-PCA leveraging the FIR location and covariance approximations from \autoref{subsec:robust-localtion-covariance} to obtain the projection. Let $\mathbf{X}$ be the input matrix data with $n$ sample points and $p$ features. We begin by performing a singular value decomposition on the centered and normalized data according to
    $(\mathbf{X} - \mathbf{1} \otimes\mathbf{\mu}^{T}_{0})/\sqrt{n} = \mathbf{U} \mathbf{S} \mathbf{V}^{T}$, 
where $\mathbf{1}$ is column vector of ones and $\mathbf{\mu}_{0}$ is the classical mean of $\mathbf{X}$. The symbol $\otimes$ denotes the Kronecker product.
The original data is then projected onto the subspace spanned by the $r_{0}$ column vectors of $\mathbf{V}$ with nonzero singular values as follows:
\begin{equation}\label{eq:projection_onto_V}
    \mathbf{Z} = \mathbf{X} \mathbf{V}_{:,r_{0}},
\end{equation}
where $\mathbf{Z}$ is of dimension $n\times r_{0}$, and $\mathbf{V}_{:,r_{0}}$ denote the first $r_{0}$ columns of $\mathbf{V}$. The step \autoref{eq:projection_onto_V} is a data transformation without loss of information. \autoref{algo:fir-algo} is applied to $\mathbf{Z}$ for estimating the robust location $\mathbf{\mu}$ and covariance $\mathbf{\Sigma}$ used to calculate the robust scores
\begin{equation}\label{eq:T-scores}
    \mathbf{T} = \big(\mathbf{Z} - \mathbf{\mu}\big)\mathbf{P}_{:,r_{1}},  
\end{equation}
where $\mathbf{\Sigma} = \mathbf{P} \mathbf{L} \mathbf{P}^{T}$ and $\mathbf{P}_{:,r_{1}}$ represents the first $r_{1}$ column vectors of $\mathbf{P}$ with nonzero eigenvalues in $\mathbf{L}$. The robust principal vectors are chosen to be $\mathbf{V}_{:, r_{1}}$ and the variance is 
\begin{equation}\label{eq:sigma-BBD}
    \mathbf{\sigma} = diag \big( \mathbf{L} )^2.
\end{equation}
The robust PCA mean is obtained from 
\begin{equation}\label{eq:mu1}
    \mu_{1} = \mu_{0} + \mu \mathbf{V}^{T}_{r_{1}, :}.
\end{equation}
Identifying outliers is critical to isolating potential anomalies or features that deviate from the underlying structure of the data. These outliers are identified using the robust orthogonal and score distances of the data points, defined as follows:
\begin{equation}\label{eq:score-dist}
    sd_{i} = \sqrt{\sum_{j=1}^{r_{1}} \frac{t^{2}_{i,j}}{l_{j}}}
\end{equation}
\begin{equation}\label{eq:orthogonal-dist}
    od_{i} = ||\mathbf{x}_{i}- \mathbf{\mu}_{1} - \mathbf{P}_{r_{0}, k}\mathbf{t}_{i}^{T}||.
\end{equation}
\autoref{algo:fir-pca} outlines the steps for the FIR-PCA method where $\mathbf{X}$ is the input data, $m$ is the batch size, and $\alpha$ represents the percentage of data used to estimate the location and covariance in \autoref{algo:fir-algo}. The outputs consist of $\mathbf{T}$ (scores), $\mathbf{V}$ (principal components), $\sigma$ (variances), $\mu$ (mean), $sd$ (score distances), and $od$ (orthogonal distances).
\begin{algorithm}[H]
    \caption{ Fast Iterative Robust PCA}
    \label{algo:fir-pca}
    \begin{algorithmic}
        \State \textbf{Input:} $\mathbf{X}$, $m$, $\alpha$ \textbf{Output:} $\mathbf{T}$, $\mathbf{V}$, $\sigma$, $\mu$, $sd$, $od$
         \begin{enumerate}
            \item Compute $\mathbf{Z}$ from \autoref{eq:projection_onto_V}
            \item Compute $H,\mathbf{\mu}$, and $\mathbf{\Sigma}$ using \autoref{algo:fir-algo}
            \item Calculate $\mathbf{T}$, $\mathbf{V}_{:, r_{1}}$, $\mathbf{\sigma}$, and $\mathbf{\mu_{1}}$ using \autoref{eq:T-scores}, \autoref{eq:sigma-BBD}, and \autoref{eq:mu1} 
            \item Calculate score and orthogonal distance for each projected point from \autoref{eq:score-dist} and \autoref{eq:orthogonal-dist}
            \item Separate inliers and outliers indices based on the computed distances 
         \end{enumerate}
    \end{algorithmic}
\end{algorithm}

\section{Properties of the FIR Method}
\label{sec:propreties}

\subsection{Invariant Properties}
\label{subsec:invariant-prperties}
\textbf{\textit{Orthogonal equivariance}} refers to a property where the method results are invariant under translation and orthogonal transformation. More precisely, for any nonsingular orthogonal matrix $\mathbf{A}$ with dimensions $p \times p$ and vector $\mathbf{v}$ with dimension $1\times p$ the method is \textit{orthogonal equivariant} if 
\begin{equation}\label{eq:equivariance-mu}
    \mathbf{\mu \big( Z A + 1_{n} v^{T}\big) = \mu\big( Z\big)A + v},
\end{equation}
\begin{equation}\label{eq:equivariance-sigma}
    \mathbf{\Sigma \big( ZA +  1_{n} v^{T}\big) = A^{T}\Sigma\big( Z\big)A}
\end{equation}
\begin{theorem}\label{theo:theo1}
    Let $\mathbf{Z} \in \mathbb{R}^{n \times p}$ be a nonsingular data matrix with $n$ observation and $p$ features. If $\mathbf{A}$ is an orthogonal transform, $\mathbf{v}$ a vector, and $m$ the batch size, the scaled distances $d_{i}^{(k)}  = \sum_{j=1}^{\hat{p}} \bigg(\frac{\mathbf{\hat{Z}}_{i,j}^{(k)}}{ s^{(k)}_{j}}\bigg)^2$ from \autoref{eq:bb-distance} are invariant for $\mathbf{Z}$ and $\mathbf{Z A} + \mathbf{1}_{n} \mathbf{v}^{T}$ and the FIR method is \textit{orthogonal equivariant}.
\end{theorem}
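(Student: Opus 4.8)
The plan is to prove the two claims together along the skeleton: (a) the initial index set $H^{(1)}$ is unchanged by the map $\mathbf{Z}\mapsto\mathbf{Z}\mathbf{A}+\mathbf{1}_n\mathbf{v}^T$; (b) at every iteration the IPCA state transforms so benignly that $\mathbf{\hat Z}^{(k)}$, the scaled distances $d_i^{(k)}$, and the selection box of \autoref{eq:bounding-box} are literally unchanged, whence $H^{(k+1)}$ is unchanged; (c) by induction the final selected set $H$ is the same for both inputs, and then \autoref{eq:equivariance-mu}--\autoref{eq:equivariance-sigma} drop out of a one-line computation on the sample mean and covariance over $H$. Throughout I treat an observation as a column $\mathbf{z}_i\in\mathbb{R}^p$, so the transformation sends $\mathbf{z}_i\mapsto\mathbf{A}^T\mathbf{z}_i+\mathbf{v}$, and I note that since the algorithm is deterministic in its input, once the relevant arrays are shown to be identical for the two inputs, any tie-breaking rule produces the same index sets.

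For step (a) I would use the exact projection depth of \autoref{eq:projection-depth} (the supremum over all unit $\mathbf{u}$; the finitely-sampled version used in practice obeys the same argument once the sampled directions are transported by $\mathbf{A}$). Substituting $\mathbf{w}=\mathbf{A}\mathbf{u}$, which is a bijection of the unit sphere, gives $\mathbf{u}^T(\mathbf{A}^T\mathbf{z}_i+\mathbf{v})=\mathbf{w}^T\mathbf{z}_i+\mathbf{u}^T\mathbf{v}$; the constant $\mathbf{u}^T\mathbf{v}$ cancels between $\mathbf{u}^T\mathbf{z}'_i$ and $\mathrm{med}(\mathbf{u}^T\mathbf{Z}')$, $\mathrm{mad}$ is shift invariant, so the ratio in \autoref{eq:projection-depth} at direction $\mathbf{u}$ for the transformed data equals the ratio at direction $\mathbf{w}$ for $\mathbf{Z}$, and the suprema agree. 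Hence every point keeps its depth value, so the depth ranking and $H^{(1)}$ are identical.

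Step (b) is the heart of the matter. Because the IPCA update retains all nonzero singular values, its state after absorbing $\{\mathbf{z}_i\}_{i\in H^{(k)}}$ coincides with the SVD of the mean-centered submatrix $\mathbf{Z}^{(k)}_c$ (IPCA is exact once all nonzero components are kept). If $\mathbf{Z}^{(k)}_c=\mathbf{U}^{(k)}\mathbf{S}^{(k)}(\mathbf{P}^{(k)})^T$, then centering the transformed submatrix annihilates the shift $\mathbf{v}$ and leaves $\mathbf{Z}^{(k)}_c\mathbf{A}=\mathbf{U}^{(k)}\mathbf{S}^{(k)}(\mathbf{A}^T\mathbf{P}^{(k)})^T$, a bona fide SVD since $\mathbf{A}^T\mathbf{P}^{(k)}$ has orthonormal columns. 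Thus the singular values $s^{(k)}_j$ are unchanged and the right singular vectors are rotated to $\mathbf{A}^T\mathbf{P}^{(k)}$; projecting any point (selected or not), centered by the IPCA mean $\mathbf{\bar z}^{(k)}$, onto them yields $\big(\mathbf{A}^T(\mathbf{z}_i-\mathbf{\bar z}^{(k)})\big)^T(\mathbf{A}^T\mathbf{P}^{(k)})=(\mathbf{z}_i-\mathbf{\bar z}^{(k)})^T\mathbf{P}^{(k)}$, so $\mathbf{\hat Z}^{(k)}$ is exactly the same matrix. Invariance of $d_i^{(k)}$ in \autoref{eq:bb-distance} is then immediate (this already proves the distance claim of the theorem), and the selection box of \autoref{eq:bounding-box} --- built from the $j=1,2$ columns of $\mathbf{\hat Z}^{(k)}$ over $H^{(k)}$ and the $0.5$-expansion $\delta_j$ --- is likewise unchanged, so the same $m$ points are chosen and $H^{(k+1)}$ coincides. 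Iterating, $H$ is the same for $\mathbf{Z}$ and $\mathbf{Z}\mathbf{A}+\mathbf{1}_n\mathbf{v}^T$. Finally $\mathbf{\mu}(\mathbf{Z}\mathbf{A}+\mathbf{1}_n\mathbf{v}^T)=\tfrac{1}{|H|}\sum_{i\in H}(\mathbf{A}^T\mathbf{z}_i+\mathbf{v})=\mathbf{A}^T\mathbf{\mu}(\mathbf{Z})+\mathbf{v}$, and the centered rows transform by right-multiplication by $\mathbf{A}$, giving $\mathbf{\Sigma}(\mathbf{Z}\mathbf{A}+\mathbf{1}_n\mathbf{v}^T)=\mathbf{A}^T\mathbf{\Sigma}(\mathbf{Z})\mathbf{A}$, i.e.\ \autoref{eq:equivariance-mu}--\autoref{eq:equivariance-sigma} in the row convention of the paper.

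The expected obstacle is the non-uniqueness of the SVD/IPCA factorization: sign flips of individual right singular vectors, or --- when two singular values coincide --- rotations within the shared eigenspace, alter $\mathbf{P}^{(k)}$ and could in principle change the $j=1,2$ bounding box. Sign flips are harmless, since a point's coordinate flips together with its axis so box membership is preserved and $d_i^{(k)}$ is a sum of squares; the only genuinely delicate case is a tie among the two leading singular values, which I would rule out by the nonsingular/general-position hypothesis on $\mathbf{Z}$, observing separately that the distance claim needs no such hypothesis because $d_i^{(k)}$ restricted to any block of equal singular values equals $\|\Pi_{\mathrm{block}}(\mathbf{z}_i-\mathbf{\bar z}^{(k)})\|^2/s^2$, a basis-free quantity. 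I would state this caveat explicitly and otherwise follow the outline above.
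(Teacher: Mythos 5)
Your proof is correct and follows the same strategy as the paper's: show $H^{(1)}$ is preserved via equivariance of the projection depth, show the IPCA scores and singular values (hence the scaled distances and selection box) are invariant at each iteration, and conclude by induction that the final index set $H$ coincides, from which the equivariance of $\mathbf{\mu}$ and $\mathbf{\Sigma}$ follows. The one substantive difference is that where the paper merely \emph{asserts} the invariance of the IPCA scores and singular values (\autoref{eq:ipca-score-H1}--\autoref{eq:ipca-singular-values-H1}), you actually derive it from the SVD of the centered submatrix ($\mathbf{Z}^{(k)}_c\mathbf{A}=\mathbf{U}^{(k)}\mathbf{S}^{(k)}(\mathbf{A}^{T}\mathbf{P}^{(k)})^{T}$, so scores $(\mathbf{z}_i-\bar{\mathbf{z}})^{T}\mathbf{P}^{(k)}$ are unchanged), and you explicitly address the SVD non-uniqueness issue (sign flips are harmless for the sum-of-squares distance and box membership; ties among leading singular values must be excluded by a general-position assumption) --- a genuine caveat that the paper's proof passes over in silence.
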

\begin{proof}
    The initial $m$ indices are obtained from the projection depth defined in \autoref{eq:projection-depth}.The projection depth is \textit{affine equivariant}~\cite{Zuo2000} and consequently it is also \textit{orthogonal equivariant}. The orthogonal relationship for the first $m$ selected point can be expressed as follows: 
    \begin{equation}\label{eq:equivariance-mu}
        \mathbf{\mu} \big( \mathbf{Z}_{H^{(1)}} \mathbf{A} + \mathbf{1}_{m} \mathbf{v}^{T}\big) = \mathbf{\mu}\big( \mathbf{Z}_{H^{(1)}}\big)\mathbf{A + v},
    \end{equation}
    \begin{equation}\label{eq:equivariance-sigma}
        \mathbf{\Sigma} \big( \mathbf{Z}_{H^{(1)}}\mathbf{A} +  \mathbf{1}_{m} \mathbf{v}^{T}\big) = \mathbf{A}^{T}\mathbf{\Sigma}\big( \mathbf{Z}_{H^{(1)}}\big)\mathbf{A}, 
    \end{equation}
    where $H^{(1)}$ denotes the initial subset of indices selected using the projection depth.
    
    Noting that the IPCA projection scores and singular values constructed from the index set $H^{(1)}$ is invariant under rotation and translation, we write that
    \begin{equation}\label{eq:ipca-score-H1}
        f_{H^{(1)}}(\mathbf{Z}) = \tilde{f}_{H^{(1)}}(\mathbf{Z}  \mathbf{A} + \mathbf{1}_{n} \mathbf{v}^{T})= \hat{\mathbf{Z}}, \textrm{ and}
    \end{equation}
    \begin{equation}\label{eq:ipca-singular-values-H1}
        \mathbf{s}_{(\mathbf{Z}_{H^{(1)}})} =\mathbf{s}_{(\mathbf{Z}_{H^{(1)}}  \mathbf{A} + \mathbf{1}_{n} \mathbf{v}^{T})}= \mathbf{s},
    \end{equation}
    where the functions $f_{H^{(1)}}$ and  $\tilde{f}_{H^{(1)}}$ constructed from IPCA fit of $\mathbf{Z}_{H^{(1)}}$ and $\mathbf{Z}_{H^{(1)}}  \mathbf{A} + \mathbf{1}_{n} \mathbf{v}^{T}$, respectively, project the data onto the corresponding IPCA spaces. The term $\mathbf{s}_{\mathbf{Y}}$ denote the singular values of $\mathbf{Y}$. From the results in \autoref{eq:ipca-score-H1} and \autoref{eq:ipca-singular-values-H1}, it follows that the scaled distances $d_{i}^{(1)}  = \sum_{j=1}^{\hat{p}} \bigg(\frac{\mathbf{\hat{Z}}_{i,j}^{(1)}}{ s^{(1)}_{j}}\bigg)^2$, in \autoref{eq:bb-distance}, and the new indices added form $H^{(2)}$ are the identical for both $\mathbf{Z}$ and $\mathbf{Z A} + \mathbf{1}_{n} \mathbf{v}^{T}$. These results imply that the set of $m$ points with smallest scaled distance that lie in the bounding box defined in \autoref{eq:bounding-box} is also identical for both $\mathbf{Z}$ and $\mathbf{Z A} + \mathbf{1}_{n} \mathbf{v}^{T}$ and following \textit{orthogonal equivariance} relationship based on the $2m$ selected points holds
     \begin{equation}\label{eq:equivariance-mu-H2}
        \mathbf{\mu} \big( \mathbf{Z}_{H^{(2)}} \mathbf{A} + \mathbf{1}_{2m} \mathbf{v}^{T}\big) = \mathbf{\mu}\big( \mathbf{Z}_{H^{(2)}}\big)\mathbf{A + v},
    \end{equation}
    \begin{equation}\label{eq:equivariance-sigma-H2}
        \mathbf{\Sigma} \big( \mathbf{Z}_{H^{(2)}}\mathbf{A} +  \mathbf{1}_{2m} \mathbf{v}^{T}\big) = \mathbf{A}^{T}\mathbf{\Sigma}\big( \mathbf{Z}_{H^{(2)}}\big)\mathbf{A}, 
    \end{equation}
    where $H^{(2)}$ denote the subset of indices constructed by combining $H^{(1)}$ and newly added indices.
    By induction, the procedure from \autoref{eq:ipca-score-H1} to \autoref{eq:equivariance-sigma-H2} holds for $H^{(2)}$, and subsequent index sets $H^{(k)}$. The final index set $H$ is identical for both  $\mathbf{Z}$ and $\mathbf{Z A} + \mathbf{1}_{n} \mathbf{v}^{T}$ and the FIR method \textit{orthogonal equivariant}.
\end{proof}

\textbf{\textit{Permutation Invariant}}: The permutation invariant requires that applying a permutation matrix that shuffles the data doesn't change the results. Because the projection depth, the IPCA, and the scaled distances defined in \autoref{eq:bb-distance} are permutation invariant. Applying a permutation matrix to data doesn't change the index set selected at each iteration step in the FIR methods outlined in \autoref{algo:fir-algo}.

\subsection{Robustness}
\label{subsec:robusness}
Robustness to outliers is crucial for accurate statistical approximations, subsequent analysis, and interpretation of the data. To evaluate the robustness of FIR compared to DetMCD, and FBD, we consider \textbf{Cluster} and \textbf{Point} datasets with $40\%$ outliers described in \autoref{sec:results}. The example in \autoref{fig:roubustness_eval} is based on $n=1000$ observations and $p=10$ features. The scatter plots of the last two ($9$ and $10$) feature axes show the subset of points selected by DetMC, FDB, and FIR from left to right, with the outliers indicated in red. All three methods successfully select inliers in the cases with clustered outliers, shown in the top row of \autoref{fig:roubustness_eval}. In the examples with point outliers in the bottom row, the DetMCD and FDB methods include the outliers, whereas the FIR correctly excludes them from the selection of inliers. These results demonstrate that the FIR methods exhibit improved robustness to outlier contamination compared to FDB and DetMCD. 
\begin{figure}[H]
    \centering
    \begin{subfigure}[t]{0.32 \textwidth}
        \includegraphics[width=0.8\linewidth]{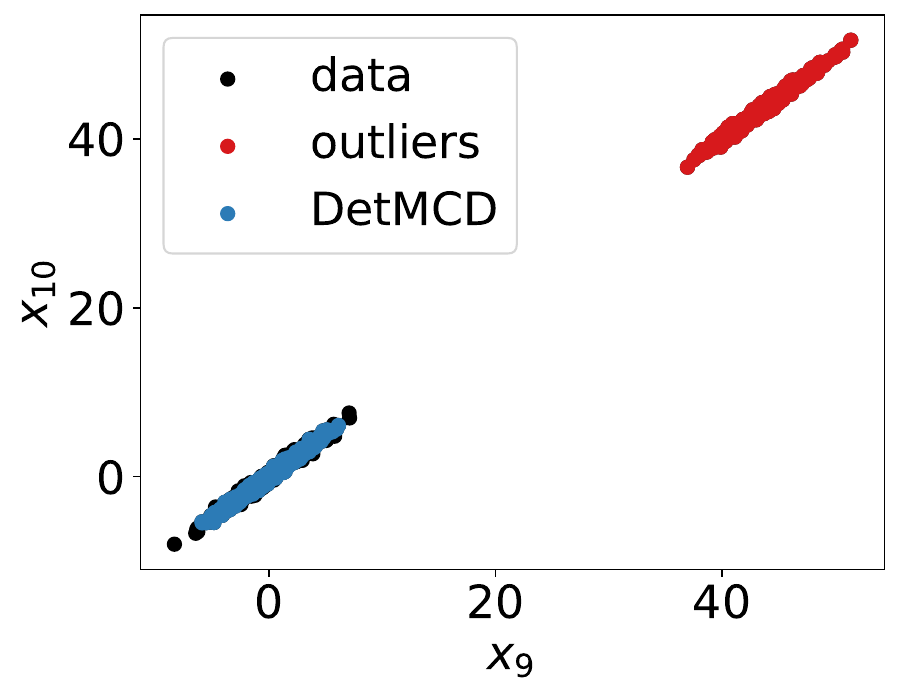}
        \label{subfig:DetMCD_cluster_outliers}
    \end{subfigure}
    \begin{subfigure}[t]{0.32 \textwidth}
        \includegraphics[width=0.8\linewidth]{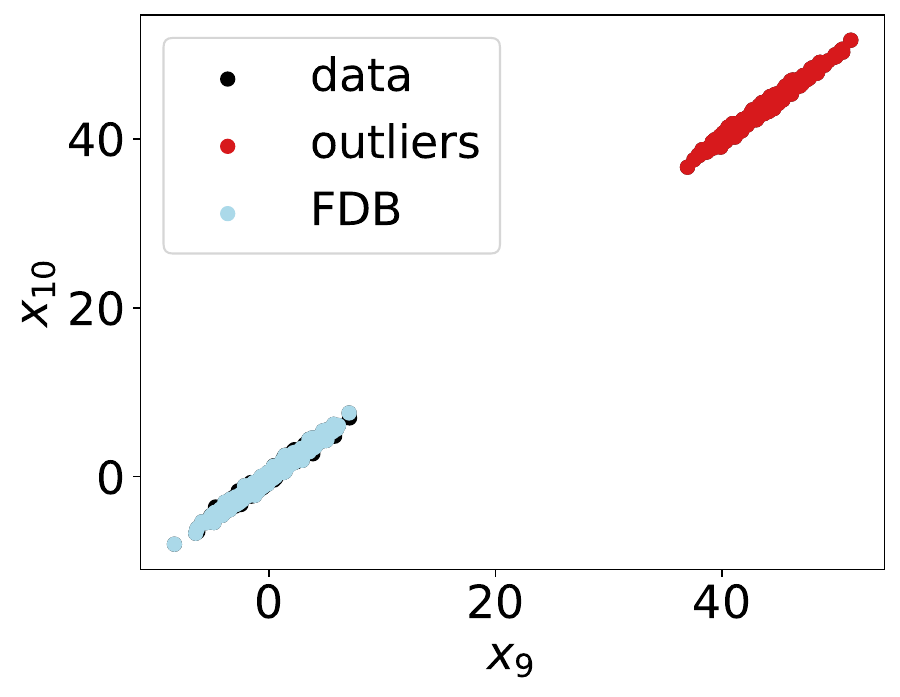}
        \label{subfig:FDB_cluster_outliers}
    \end{subfigure}
    \begin{subfigure}[t]{0.32 \textwidth}
        \includegraphics[width=0.8\linewidth]{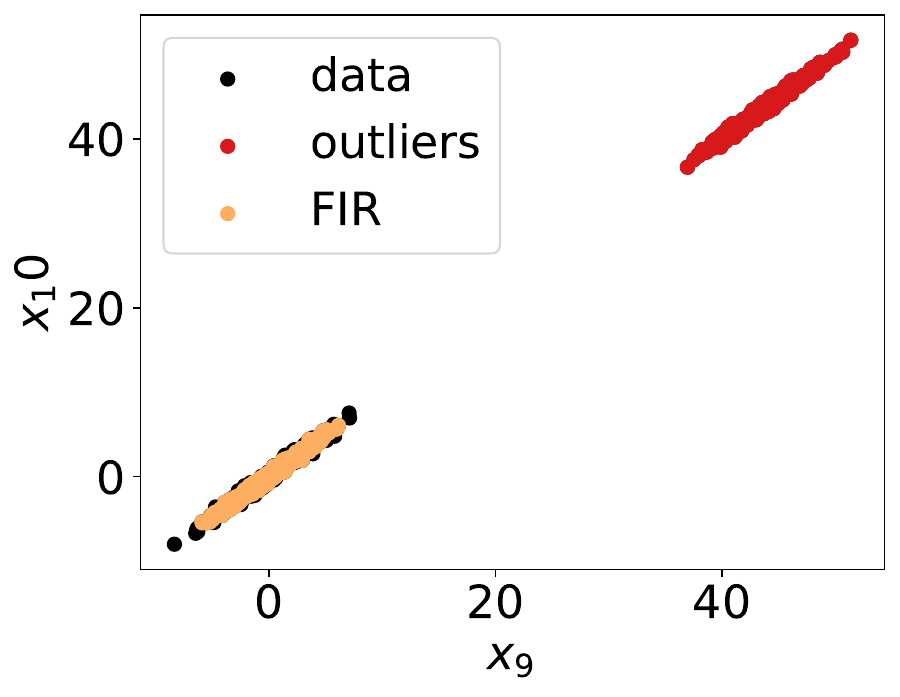}
        \label{subfig:FIR_cluster_outliers}
    \end{subfigure}
    \begin{subfigure}[t]{0.32 \textwidth}
        \includegraphics[width=0.8\linewidth]{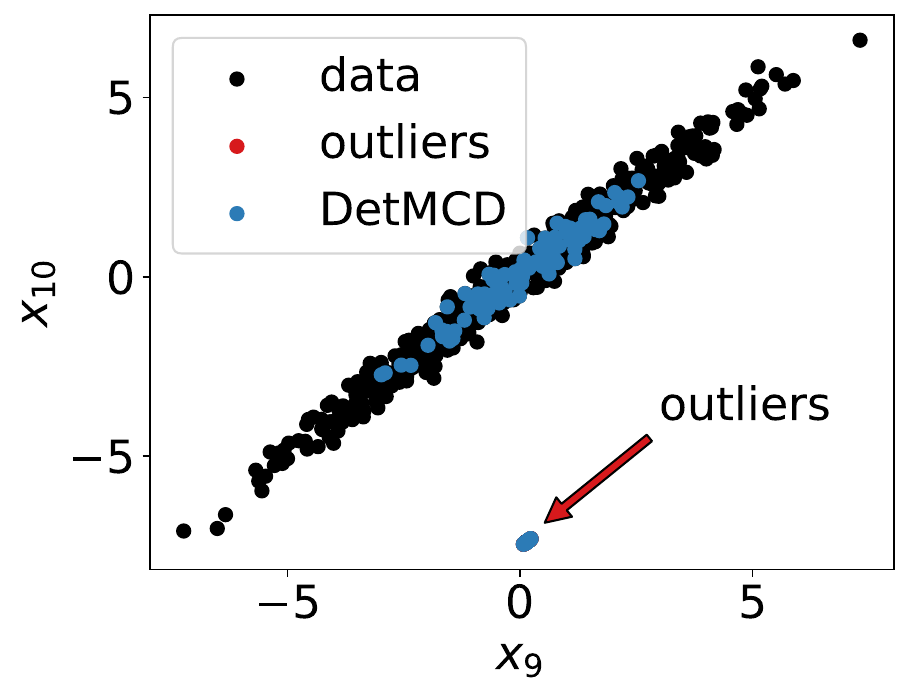}
        \caption{DetMCD $p=10$}
        \label{subfig:DetMCD_point_outliers}
    \end{subfigure}
    \begin{subfigure}[t]{0.32 \textwidth}
        \includegraphics[width=0.8\linewidth]{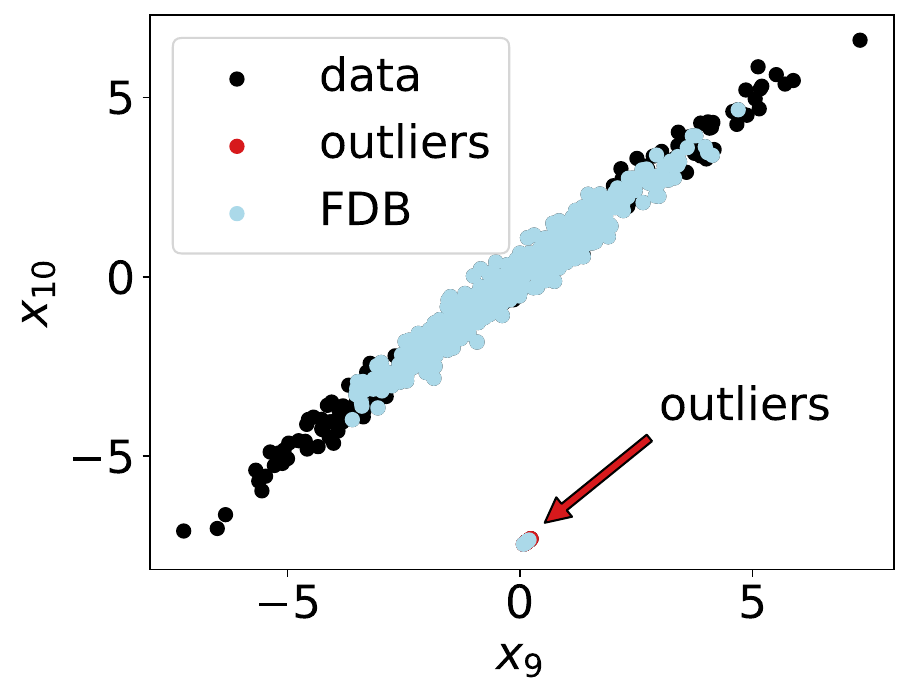}
        \caption{FDB $p=10$}
        \label{subfig:FDB_point_outliers}
    \end{subfigure}
    \begin{subfigure}[t]{0.32 \textwidth}
        \includegraphics[width=0.8\linewidth]{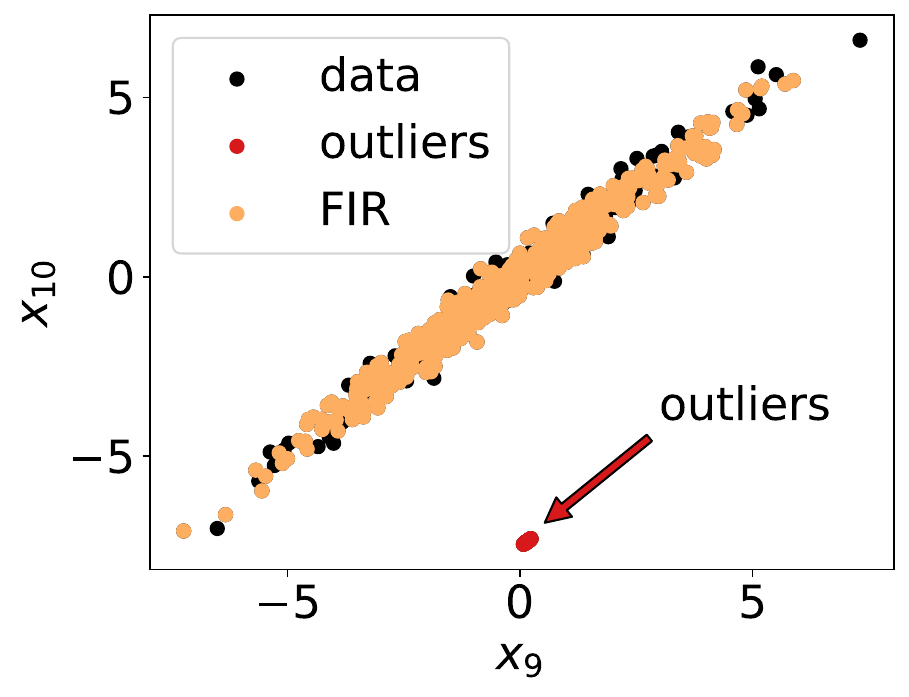}
        \caption{FIR $p=10$}
        \label{subfig:FIR_point_outliers}
    \end{subfigure}
    \caption{Selected points with DetMCD, FDB, and FIR. The example uses $n=1000$ sample points with $40\%$ outliers. The top and bottom two rows correspond to the cluster and point outliers, respectively, with $h=0.5n$. The horizontal  and vertical axis corresponds to the 9th and 10th feature dimension, respectively.} 
    \label{fig:roubustness_eval}
\end{figure}

\subsection{Computational Complexity Analysis}
\label{subsec:computational-complexity-analysis}
The computational complexity of \autoref{algo:fir-algo} depends on the calculation of the projection depth values used for selecting the initial $m$ points, the IPCA, and the calculation of the scaled distances. The resulting computational complexity is $O(\kappa^{2}mp + \kappa mp^{2} + \tau np)=O(\kappa np + np^{2} + \tau np)= O((\kappa +\tau)np + np^2)$, where $\tau np $ denote the projection depth estimation, $\kappa mp^{2}$ corresponds to the IPCA updates, and $\kappa^{2}mp$ to the scaled distances calculations. The term $\kappa$ denotes the number of batches. 

In comparison, The MCD-based methods have a computational complexity of $O\big(\Gamma(p^{3} + np^2)\big)$, where $\Gamma$ corresponds to the number of iterations of the C-step. The term $(p^{3} + np^2)$ is the cost of a single C-step, where $p^{3}$ corresponds to the inversion of the covariance matrix required for the Mahalanobis distance computation, and $np^2$ accounts for the Mahalanobis distance calculation for all the observations. The FBD method ~\citep{Zhang02012024} has a computational complexity of $O(\tau np)$, corresponding to the cost of estimating projection depths. The parameter $\tau$ corresponds to the number of sample directions used in the depth calculations. As we demonstrate in Section \autoref{subsec:performance}, the FDB and FIR methods are significantly faster than the DetMCD method, and the FIR method demonstrates improved robustness to outliers.

\section{Simulated Examples}
\label{sec:results}
We use simulated datasets to demonstrate the robustness and computational efficiency of the FIR method and compare to two existing methods: DetMCD and FBD.The datasets are obtained from $\mathbf{x}_{i} = \mathbf{Gy}_{i}$, where $\mathbf{y}_{i}$ are sampled from $\mathcal{N}_{p}(0, \mathbf{I})$, and $\mathbf{G}$ is a matrix of dimension $p\times p$ with ones on the diagonal, and $0.75$ on the off diagonals. We consider three types of outliers--\textbf{Cluster}, \textbf{\textbf{Radial}}, and \textbf{Point}-- which have been utilized in \cite{boudt2020minimum,Zhang02012024} to assess location and covariance approximation methods. The \textbf{Point} outliers are obtained from $\mathbf{y}_{i} \sim \mathcal{N}_{p}(r \sqrt{p} \mathbf{a}, 0.01^{2} \mathbf{I})$, where $\mathbf{a}$ is a unit vector orthogonal to $\mathbf{a}_{0}=(1, 1, \cdots, 1)^{T}$ and $r$ is a constant. 
The \textbf{Cluster} outliers are generated from $\mathbf{y}_{i} \sim \mathcal{N}_{p} (r p^{-1/4} \mathbf{a}_{0}, \mathbf{I})$. The \textbf{Radial} outliers are generated from $\mathbf{y}_{i} \sim \mathcal{N}_{p}(\mathbf{0}, 5\mathbf{I})$. 

We employ different evaluation metrics to compare the estimated quantities to the target solutions. The location error is Euclidean distance $e_{\mu} = ||\mathbf{\hat{\mu}} -\mathbf{\mu}||_2$ quantifies the deviation between the approximated $\mathbf{\mu}$ and target location $\mathbf{\hat{\mu}}$. The Kullback-Leibler (KL) divergence $e_{KL} = trace(\mathbf{\hat{\Sigma}} \mathbf{\Sigma}^{-1}) - log(det(\mathbf{\hat{\Sigma}} \mathbf{\Sigma}^{-1})) - p$ measures how much the estimation $\mathbf{\Sigma}$ is different than the target $\hat{\mathbf{\Sigma}}$. We calculate the mean square error (MSE), $e_{MSE} = \frac{1}{p^{2}} ||\mathbf{\hat{\Sigma}} - \mathbf{\Sigma}||_{F}$, where $F$ denotes the Forbenius norm. Lastly, we compare the different methods' computational runtime $t$ in seconds.

\subsection{Accuracy}
\label{subsec:accuracy}
Here, we evaluate the accuracy of location and covariance estimation from DetMCD, FDB, and FIR and their corresponding robust PCA. \autoref{tab:approx-errors-beta-10} and \autoref{tab:approx-errors-beta-40} report approximation errors of the DetMCD, FDB, and FIR for datasets \textbf{A} ($n=200, p=5$), \textbf{B} ($n=300, p=20$), \textbf{C} ($n=400, p=50$), and \textbf{D} ($n=1000, p=100$).
\textit{Clean} corresponds to datasets without outliers, and its corresponding results are replicated in \autoref{tab:approx-errors-beta-10} and \autoref{tab:approx-errors-beta-40}. The remaining results in \autoref{tab:approx-errors-beta-10} have $\%10$ outliers and $\alpha = 0.75$ while those in \autoref{tab:approx-errors-beta-40} correspond to the cases with $\%40$ outliers and $\alpha=0.5$.
%
The results are averaged of $1000$ samples of each dataset. The standard deviations are provided in parenthesis immediately after the mean. The DetMCD, FDB, and FIR provide reasonably good approximations for the scenarios with $\%10$ outliers for the \textbf{Cluster}, \textbf{Radial}, and \textbf{Point} outliers. In the examples with $\%40$ outliers, all three methods lead to comparable approximation for the datasets with 
\begin{table}[H]
    \caption{Estimation errors for DetMCD, FDB, and FIR methods with $10\%$ outliers.}
    \label{tab:approx-errors-beta-10}
    \centering
    \scriptsize
    \begin{tabular}{lccccccccl}
    \toprule
    & & \multicolumn{3}{c}{Clean} & \multicolumn{3}{c}{Cluster}  \\
     \cmidrule(r){3-5}\cmidrule(r){6-8}
    &  & DetMCD  &  FDB   & FIR & DetMCD  &  FDB   & FIR\\
    \toprule
A &  $e_{\mu}$      &  0.18 (0.05)&  0.10 (0.07)&  0.18 (0.06)&  0.25 (0.05)&  0.05 (0.08)&  0.13 (0.07)  \\
  &  $e_{\Sigma}$   &  0.44 (0.02)&  0.22 (0.02)&  0.25 (0.02)&  0.46 (0.02)&  0.35 (0.02)&  0.34 (0.02)  \\
  &  $e_{KL}$       &  4.07 (0.11)&  4.12 (0.10)&  4.34 (0.13)&  4.06 (0.11)&  4.42 (0.10)&  4.44 (0.11)  \\
B &  $e_{\mu}$      &  0.12 (0.22)&  0.34 (0.01)&  0.32 (0.10)&  0.41 (0.20)&  1.15 (0.10)&  1.25 (0.19)  \\
  &  $e_{\Sigma}$   &  0.49 (0.00)&  0.29 (0.02)&  0.45 (0.00)&  0.56 (0.01)&  0.35 (0.02)&  0.52 (0.02)  \\
  &  $e_{KL}$       &  24.05 (0.05)&  20.09 (0.39)&  24.90 (0.14)&  25.51 (0.33)&  21.18 (0.50)&  26.09 (0.43)  \\
C &  $e_{\mu}$      &  1.05 (0.22)&  0.70 (0.61)&  1.55 (0.36)&  1.06 (0.19)&  1.33 (0.03)&  1.11 (0.15)  \\
  &  $e_{\Sigma}$   &  0.52 (0.00)&  0.39 (0.00)&  0.51 (0.00)&  0.55 (0.01)&  0.45 (0.01)&  0.55 (0.00)  \\
  &  $e_{KL}$       &  69.52 (0.07)&  59.84 (0.00)&  70.24 (0.05)&  70.61 (0.81)&  63.64 (0.58)&  71.45 (0.09)  \\
D &  $e_{\mu}$      &  1.58 (0.77)&  1.98 (0.56)&  0.50 (1.30)&  0.63 (1.22)&  4.49 (1.28)&  1.24 (1.21)  \\
  &  $e_{\Sigma}$   &  0.47 (0.02)&  0.27 (0.02)&  0.45 (0.02)&  0.47 (0.01)&  0.34 (0.02)&  0.46 (0.01)  \\
  &  $e_{KL}$       &  132.15 (1.93)&  103.24 (2.62)&  130.63 (2.42)&  131.19 (1.10)&  112.54 (2.35)&  131.69 (1.29)  \\
& & \multicolumn{3}{c}{Radial} & \multicolumn{3}{c}{Point}  \\
     \cmidrule(r){3-5}\cmidrule(r){6-8}
A &  $e_{\mu}$ &  0.37 (0.06)&  0.21 (0.05)&  0.08 (0.08)&  0.28 (0.05)&  0.51 (0.08)&  0.08 (0.07)  \\
  &  $e_{\Sigma}$ &  0.48 (0.02)&  0.31 (0.02)&  0.34 (0.02)&  0.45 (0.02)&  0.26 (0.02)&  0.34 (0.02)  \\
  &  $e_{KL}$ &  4.16 (0.11)&  4.16 (0.09)&  4.43 (0.11)&  4.05 (0.11)&  3.81 (0.09)&  4.42 (0.10)  \\
B &  $e_{\mu}$ &  0.31 (0.18)&  0.51 (0.09)&  0.56 (0.16)&  0.27 (0.19)&  0.11 (0.12)&  0.67 (0.06)  \\
  &  $e_{\Sigma}$ &  0.56 (0.01)&  0.33 (0.03)&  0.49 (0.01)&  0.56 (0.01)&  0.31 (0.02)&  0.49 (0.00)  \\
  &  $e_{KL}$ &  25.42 (0.25)&  20.39 (0.72)&  25.31 (0.20)&  25.58 (0.29)&  20.13 (0.42)&  25.13 (0.08)  \\
C &  $e_{\mu}$ &  1.59 (0.31)&  0.40 (0.61)&  0.54 (0.33)&  1.09 (0.20)&  1.21 (0.44)&  0.65 (0.05)  \\
  &  $e_{\Sigma}$ &  0.55 (0.02)&  0.44 (0.01)&  0.52 (0.01)&  0.54 (0.01)&  0.39 (0.00)&  0.54 (0.00)  \\
  &  $e_{KL}$ &  70.87 (1.03)&  62.88 (0.71)&  69.85 (0.59)&  70.46 (0.88)&  59.96 (0.21)&  71.09 (0.16)  \\
D &  $e_{\mu}$ &  0.56 (1.27)&  0.44 (1.12)&  1.23 (0.92)&  0.51 (1.27)&  0.93 (1.51)&  1.09 (1.26)  \\
  &  $e_{\Sigma}$ &  0.46 (0.01)&  0.31 (0.02)&  0.46 (0.01)&  0.47 (0.02)&  0.27 (0.02)&  0.45 (0.01)  \\
  &  $e_{KL}$ &  129.37 (1.56)&  109.54 (2.21)&  131.12 (1.63)&  150.14 (69.64)&  104.32 (2.98)&  129.73 (1.37)  \\
\bottomrule
\end{tabular}
\end{table}
\begin{table}[H]
    \caption{Estimation errors for DetMCD, FDB, and FIR methods with $40\%$ outliers.}
    \label{tab:approx-errors-beta-40}
    \centering
    \scriptsize
    \begin{tabular}{lccccccccl}
    \toprule
    & & \multicolumn{3}{c}{Clean} & \multicolumn{3}{c}{Cluster}  \\
     \cmidrule(r){3-5}\cmidrule(r){6-8}
    &  & DetMCD  &  FDB   & FIR & DetMCD  &  FDB   & FIR\\
    \toprule
A &  $e_{\mu}$      &  0.18 (0.05)&  0.10 (0.07)& 0.18 (0.06)&     0.30 (0.05)&  0.36 (0.06)&  0.43 (0.08)  \\
  &  $e_{\Sigma}$   &  0.44 (0.02)&  0.22 (0.02)& 0.25 (0.02)&     0.47 (0.02)&  0.44 (0.02)&  0.34 (0.03)  \\
  &  $e_{KL}$       &  4.07 (0.11)&  4.12 (0.10)& 4.34 (0.13)&    4.12 (0.07)&  4.88 (0.07)&  4.58 (0.09)  \\
B &  $e_{\mu}$      &  0.12 (0.22)&  0.34 (0.01)& 0.32 (0.10)&     0.61 (0.12)&  0.53 (0.06)&  0.35 (0.04)  \\
  &  $e_{\Sigma}$   &  0.49 (0.00)&  0.29 (0.02)& 0.45 (0.00)&      0.57 (0.00)&  0.56 (0.00)&  0.46 (0.01)  \\
  &  $e_{KL}$       &  24.05 (0.05)&  20.09 (0.39)&24.90 (0.14)&  25.37 (0.06)&  26.01 (0.01)&  25.02 (0.23)  \\
C &  $e_{\mu}$      &  1.05 (0.22)&  0.70 (0.61)& 1.55 (0.36)&     1.36 (0.34)&  0.22 (0.37)&  1.59 (0.38)  \\
  &  $e_{\Sigma}$   &  0.52 (0.00)&  0.39 (0.00)& 0.51 (0.00)&       0.59 (0.00)&  0.63 (0.00)&  0.57 (0.01)  \\
  &  $e_{KL}$       &  69.52 (0.07)&  59.84 (0.00)&70.24 (0.05)&  75.12 (0.16)&  76.02 (0.13)&  75.40 (0.80)  \\
D &  $e_{\mu}$      &  1.58 (0.77)&  1.98 (0.56)& 0.50 (1.30)&     0.63 (1.89)&  1.15 (1.71)&  1.08 (1.73)  \\
  &  $e_{\Sigma}$   &  0.47 (0.02)&  0.27 (0.02)& 0.45 (0.02)&      0.50 (0.00)&  0.49 (0.01)&  0.48 (0.01)  \\
  &  $e_{KL}$       &  132.15 (1.93)&  103.24 (2.62)&130.63 (2.42)& 135.33 (0.43)&  134.41 (1.18)&  137.48 (0.49)  \\
 & & \multicolumn{3}{c}{Radial} & \multicolumn{3}{c}{Point}  \\
     \cmidrule(r){3-5}\cmidrule(r){6-8}
A &  $e_{\mu}$ &  0.31 (0.01)&  0.26 (0.03)&  0.31 (0.21)&  3.01 (0.10)&  1.37 (0.25)&  0.34 (0.10)  \\
  &  $e_{\Sigma}$ &  0.64 (0.02)&  0.29 (0.02)&  0.44 (0.00)&  0.33 (0.00)&  0.23 (0.07)&  0.33 (0.00)  \\
  &  $e_{KL}$ &  4.14 (0.07)&  4.17 (0.05)&  5.26 (0.07)&  34.20 (0.06)&  21.61 (7.92)&  4.52 (0.07)  \\
B &  $e_{\mu}$ &  0.69 (0.15)&  0.33 (0.02)&  0.35 (0.04)&  8.40 (0.05)&  1.40 (0.26)&  0.32 (0.14)  \\
  &  $e_{\Sigma}$ &  0.56 (0.00)&  0.38 (0.03)&  0.40 (0.01)&  0.06 (0.00)&  0.24 (0.02)&  0.40 (0.03)  \\
  &  $e_{KL}$ &  25.18 (0.00)&  21.90 (0.70)&  23.46 (0.24)&  140.62 (0.91)&  48.80 (3.71)&  23.61 (0.53)  \\
C &  $e_{\mu}$ &  0.26 (0.50)&  1.07 (0.00)&  0.53 (0.49)&  14.25 (0.00)&  1.88 (0.28)&  0.30 (0.25)  \\
  &  $e_{\Sigma}$ &  0.62 (0.00)&  0.47 (0.01)&  0.56 (0.02)&  0.04 (0.00)&  0.43 (0.02)&  0.49 (0.00)  \\
  &  $e_{KL}$ &  75.96 (0.09)&  66.24 (0.58)&  74.86 (1.70)&  401.35 (0.63)&  64.65 (5.02)&  69.75 (0.21)  \\
D &  $e_{\mu}$ &  0.68 (1.71)&  0.66 (1.66)&  0.44 (1.06)&  20.02 (0.00)&  10.94 (0.50)&  2.76 (0.29)  \\
  &  $e_{\Sigma}$ &  0.49 (0.00)&  0.38 (0.02)&  0.48 (0.00)&  0.01 (0.01)&  0.07 (0.01)&  0.44 (0.00)  \\
  &  $e_{KL}$ &  134.50 (0.56)&  121.50 (1.96)&  138.22 (0.20)&  711.00 (1.85)&  1222.15 (54.34)&  132.57 (0.22)  \\
\bottomrule
\end{tabular}
\end{table}
\textbf{Cluster} and \textbf{Radial} outliers. However, for the datasets with $\%40$ \textbf{Point} outliers the DetMCD and FDB have significantly larger location errors, and KL divergence values than the FIR approach.

\autoref{fig:approx-errors-various-p} visualizes the estimation errors from the three methods for various $p$ feature dimensions. The examples uses $n=1000$ observation, $10\%$ outlier contamination and $\alpha=0.75$. The first, second, and third rows show the location error $e_{\mu}$, covariance mean squared error $e_{\Sigma}$, and the KL divergence, respectively. These figures show comparable errors except for the \textbf{Point} outliers where the FIR method has smaller location errors and KL divergence values compared to the DetMCD and FDB methods. \autoref{fig:approx-error-noise-level} evaluates the effect of outlier contamination level for the different methods on the \textbf{Point} outlier example with $n=1000$, $p=5$, and $\alpha=0.5$. The results show comparable errors for noise levels less than $20\%$. As the noise level increases beyond $20\%$ the location error and KL divergence significantly increase. The results in \autoref{tab:approx-errors-beta-10} \autoref{tab:approx-errors-beta-40}, \autoref{fig:approx-errors-various-p}, and \autoref{fig:approx-error-noise-level} demonstrate that the FIR method achieves an accuracy comparable to DetMCD and FDB in the example with a small percentage of outliers while demonstrating greater robustness for examples with a larger percentage of outliers.

\autoref{fig:robust-PCA-comparison} shows the scores of the first two principal components for the classical PCA, DetMC-PCA, FDB-PCA, and FIR-PCA. The dataset used in \autoref{fig:robust-PCA-comparison} is generated by calculating the $\mathbf{U V}$ and injecting $10\%$ outliers by adding $20|\mathbf{y}|$ to randomly selected rows with $\mathbf{y}$ drawn from  $\mathcal{N}_{p}(\mathbf{I}, 1)$. Here, $\mathbf{U}$ and $\mathbf{V}$ are low rank matrices of size $n\times rank$ and $rank \times p$, respectively, with $n=300$ observation, $p=10$ features,and $rank=2$. The results in \autoref{fig:robust-PCA-comparison} demonstrate that the CPCA does not mitigate the influence of the outliers, leading to incorrect location and variances estimations, as indicated by the elongated line segments in \autoref{subfig:PCA}. In contrast, the robust PCA methods successfully identify the cluster with the correct corresponding variances.  

\begin{figure}[H]
    \centering
    \begin{subfigure}[t]{0.32 \textwidth}
    \includegraphics[width=0.7\textwidth]{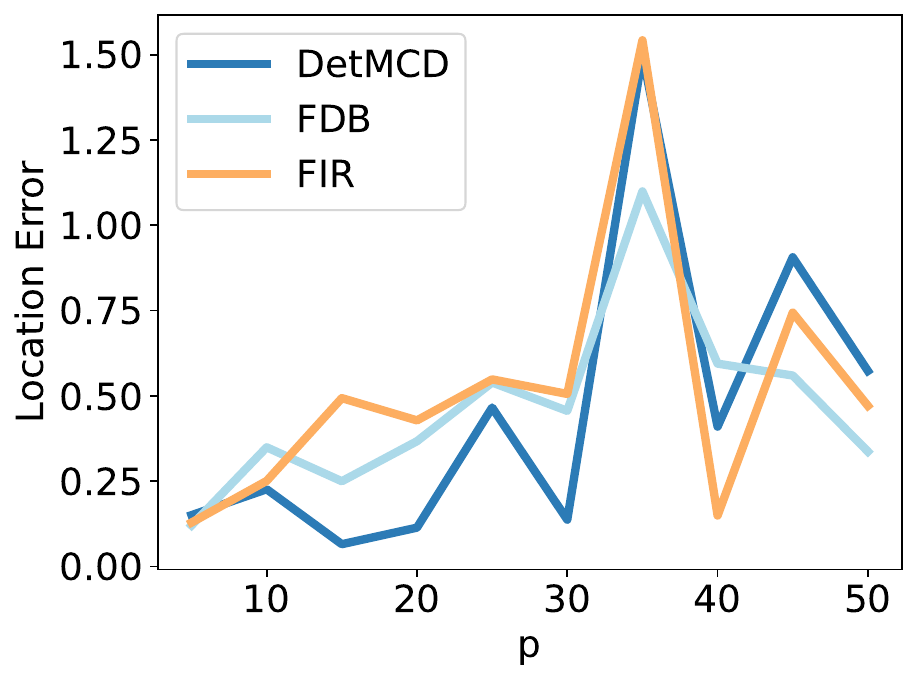}
    \label{fig:LocationError_p_A_cluster_n1000}
    \end{subfigure}
    \begin{subfigure}[t]{0.32 \textwidth}
    \includegraphics[width=0.7\textwidth]{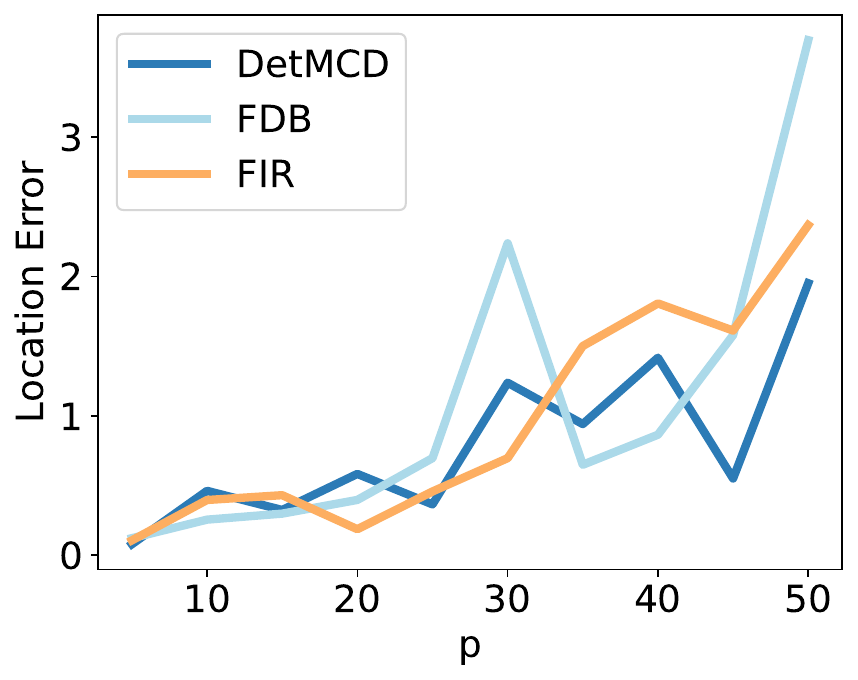}
    \label{fig:LocationError_p_A_radial_n1000}
    \end{subfigure}
    \begin{subfigure}[t]{0.32 \textwidth}
    \includegraphics[width=0.7\textwidth]{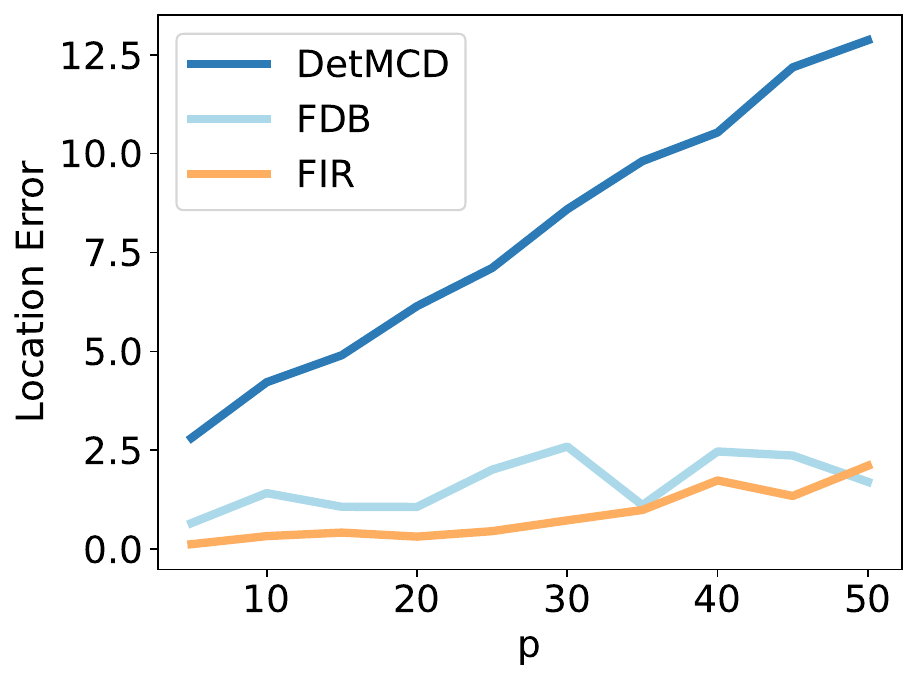}
    \label{fig:LocationError_p_A_point_n1000}
    \end{subfigure}
    \begin{subfigure}[t]{0.32 \textwidth}
    \includegraphics[width=0.7\textwidth]{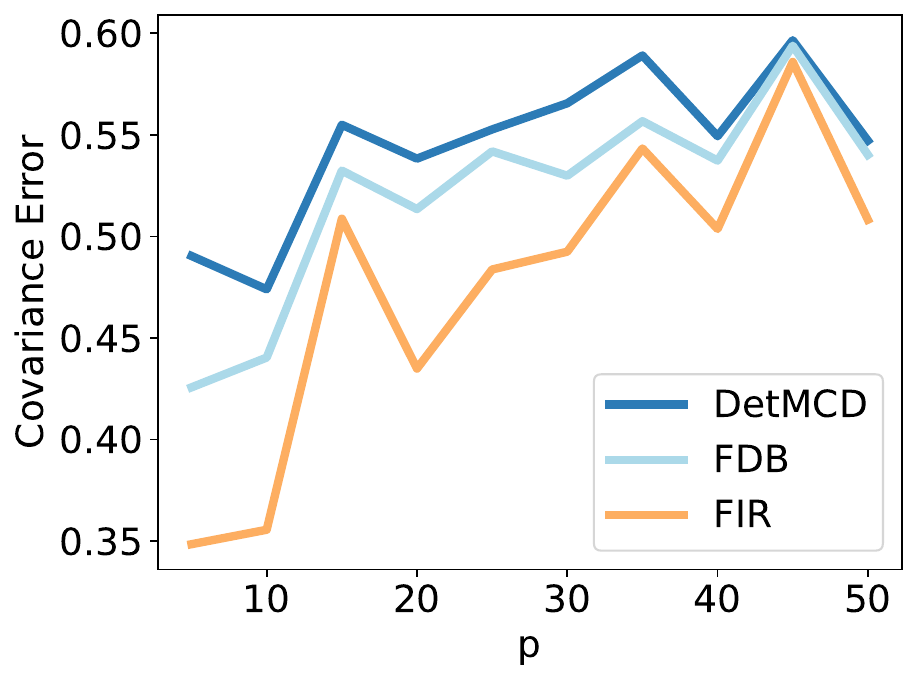}
    \label{fig:CovError_p_A_cluster_n1000}
    \end{subfigure}
    \begin{subfigure}[t]{0.32 \textwidth}
    \includegraphics[width=0.7\textwidth]{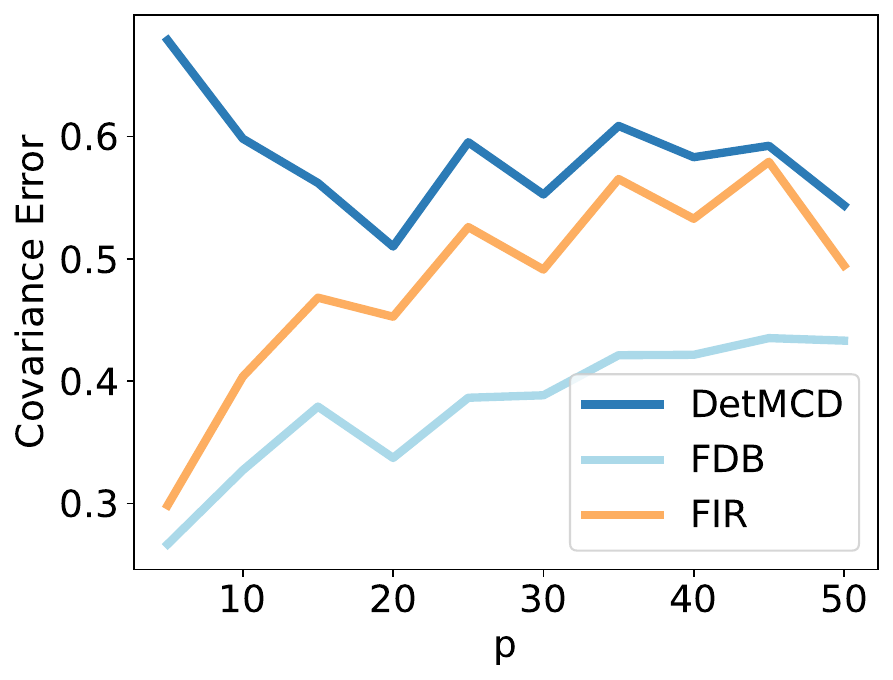}
    \label{fig:CovError_p_A_radial_n1000}
    \end{subfigure}
    \begin{subfigure}[t]{0.32 \textwidth}
    \includegraphics[width=0.7\textwidth]{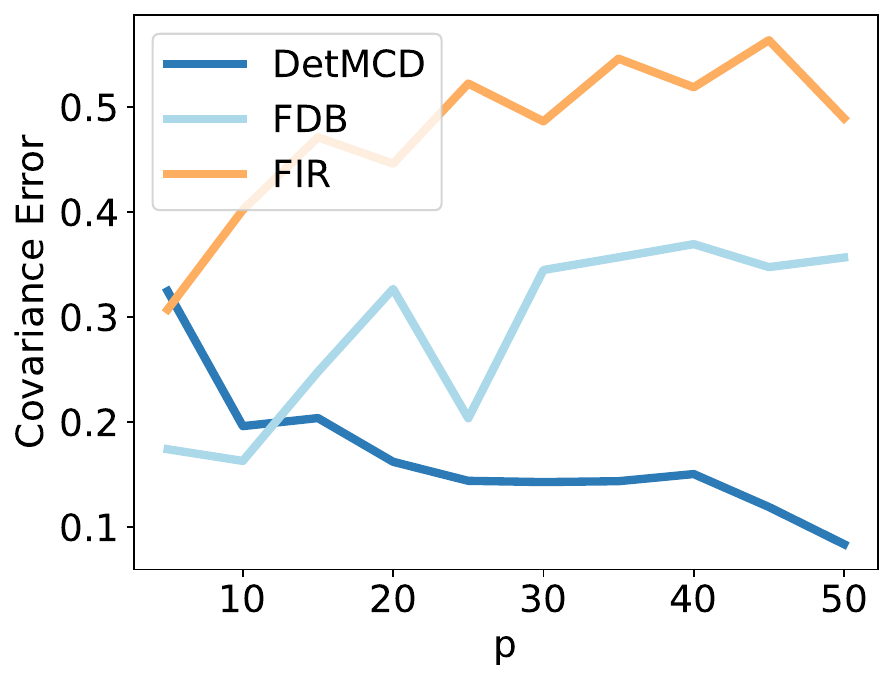}
    \label{fig:CovError_p_A_point_n1000}
    \end{subfigure}
    \begin{subfigure}[t]{0.32 \textwidth}
    \includegraphics[width=0.7\textwidth]{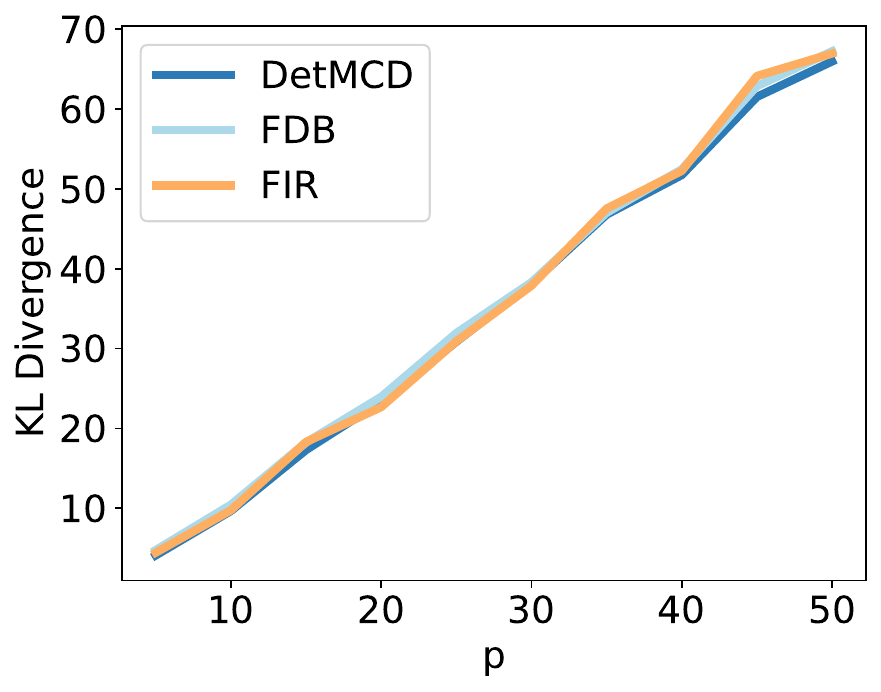}
    \caption{Cluster}
    \label{fig:KLDivergence_p_A_cluster_n1000}
    \end{subfigure}
    \begin{subfigure}[t]{0.32 \textwidth}
    \includegraphics[width=0.7\textwidth]{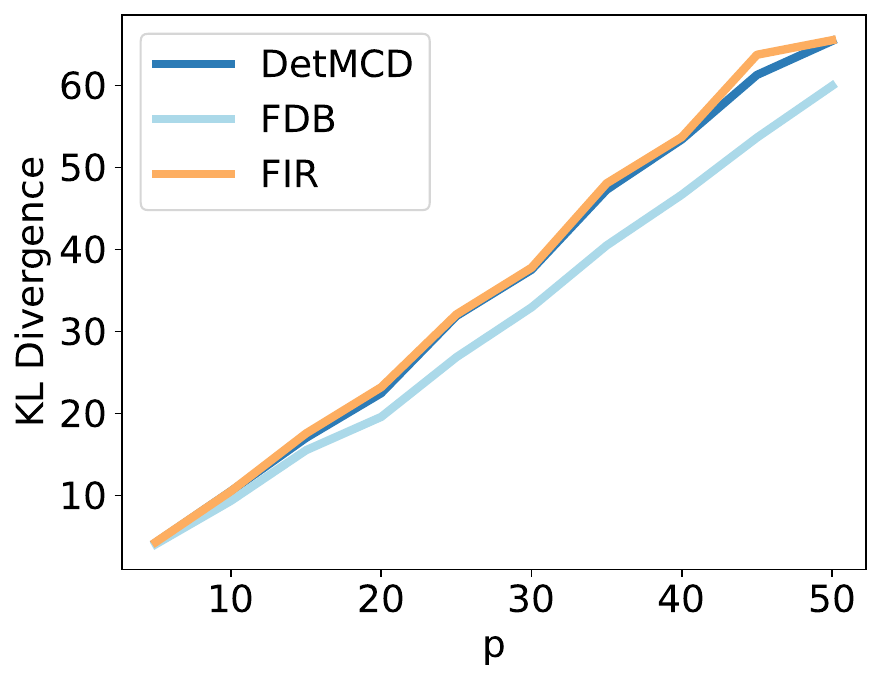}
    \caption{Radial}
    \label{fig:KLDivergence_p_A_radial_n1000}
    \end{subfigure}
    \begin{subfigure}[t]{0.32 \textwidth}
    \includegraphics[width=0.7\textwidth]{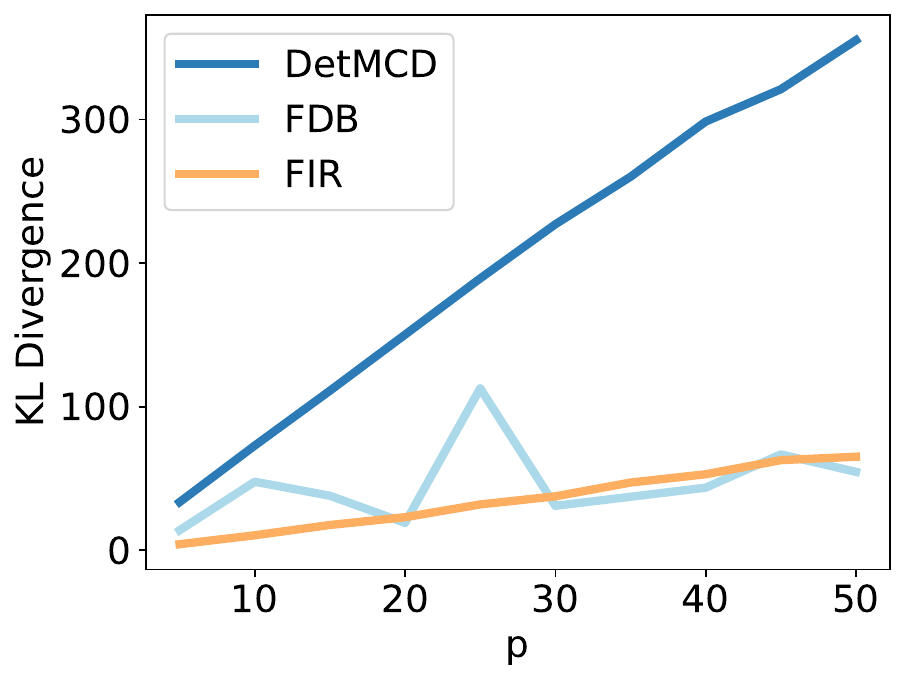}
    \caption{Point}
    \label{fig:KLDivergence_p_A_point_n1000}
    \end{subfigure}
    \caption{Approximation errors with varying number of features $p$. The examples uses $n=1000$ samples with $10\%$ outlier contamination and $\alpha = 0.75$. The first, second, and third row corresponds to the location error, covariance error, and KL-divergence, respectively}
    \label{fig:approx-errors-various-p}
\end{figure}
\begin{figure}[H]
    \centering
    \begin{subfigure}[t]{0.32 \textwidth}
    \includegraphics[width=0.7\textwidth]{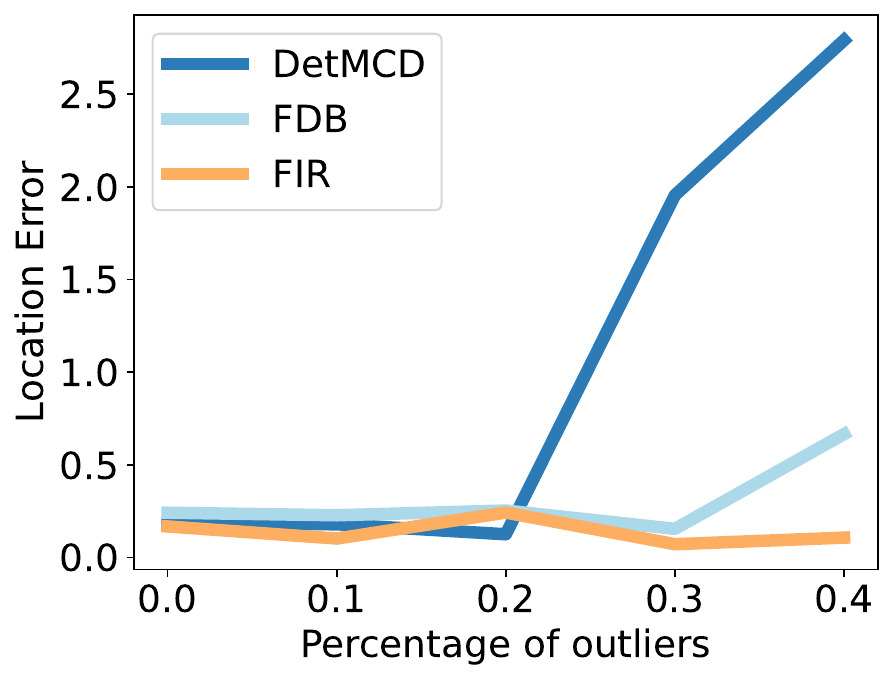}
    \label{fig:LocationError_Noise_level_A_point_p5_n1000}
    \end{subfigure}
    \begin{subfigure}[t]{0.32 \textwidth}
    \includegraphics[width=0.7\textwidth]{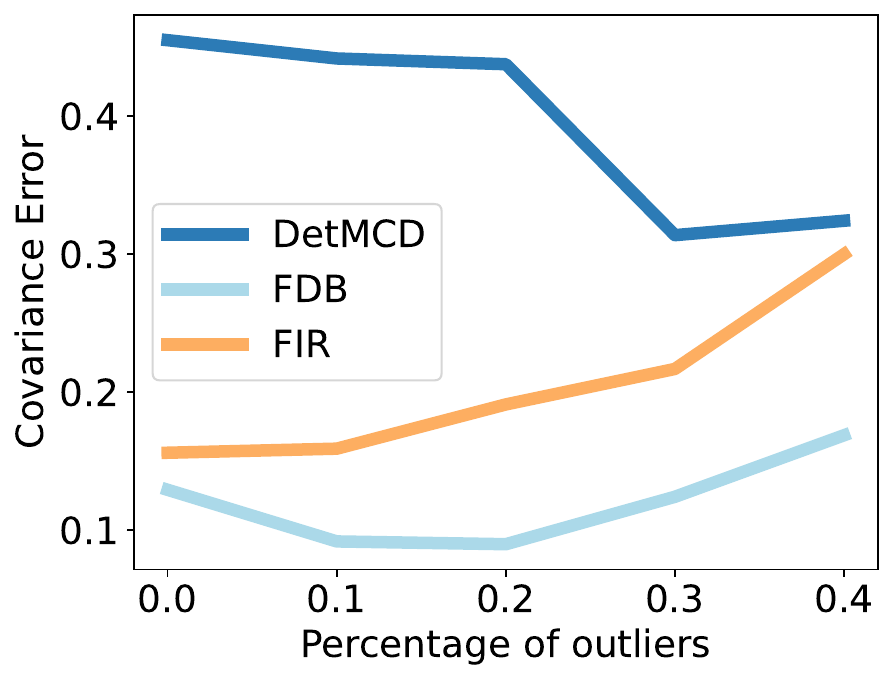}
    \label{fig:CovError_Noise_level_A_point_p5_n1000}
    \end{subfigure}
    \begin{subfigure}[t]{0.32 \textwidth}
    \includegraphics[width=0.7\textwidth]{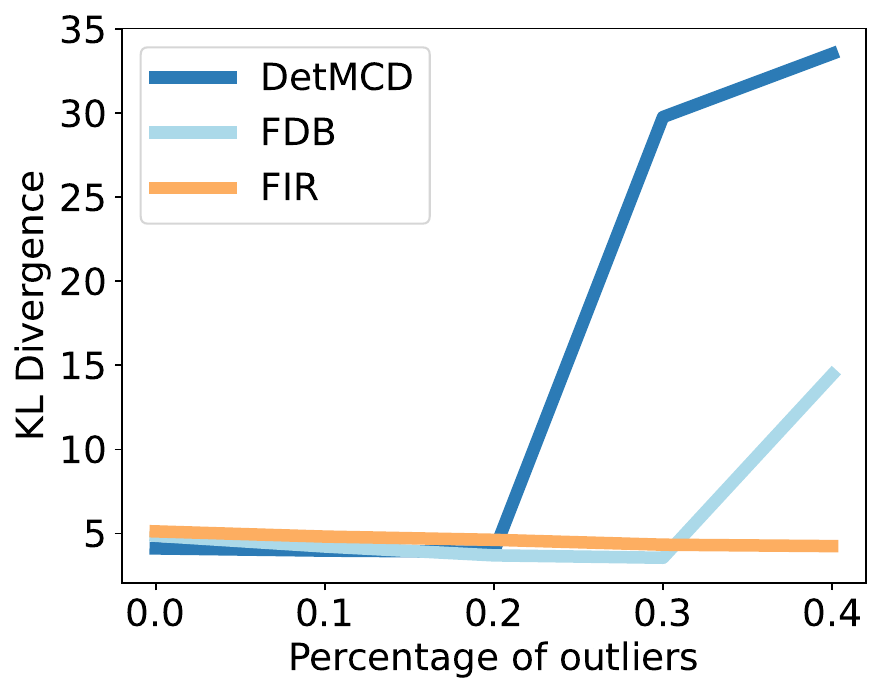}
    \label{fig:KLDivergence_Noise_level_A_point_p5_n1000}
    \end{subfigure}
    \caption{Approximation errors with varying outlier percentages. The example uses the \textbf{Point} outliers with $n=1000$ samples, $p=5$ features, and $\alpha = 0.5$. The location error, covariance error, and KL-divergence are shown from left to right.}
    \label{fig:approx-error-noise-level}
\end{figure}
\begin{figure}[H]
    \centering
    \begin{subfigure}[t]{0.24 \textwidth}
        \includegraphics[width=\textwidth]{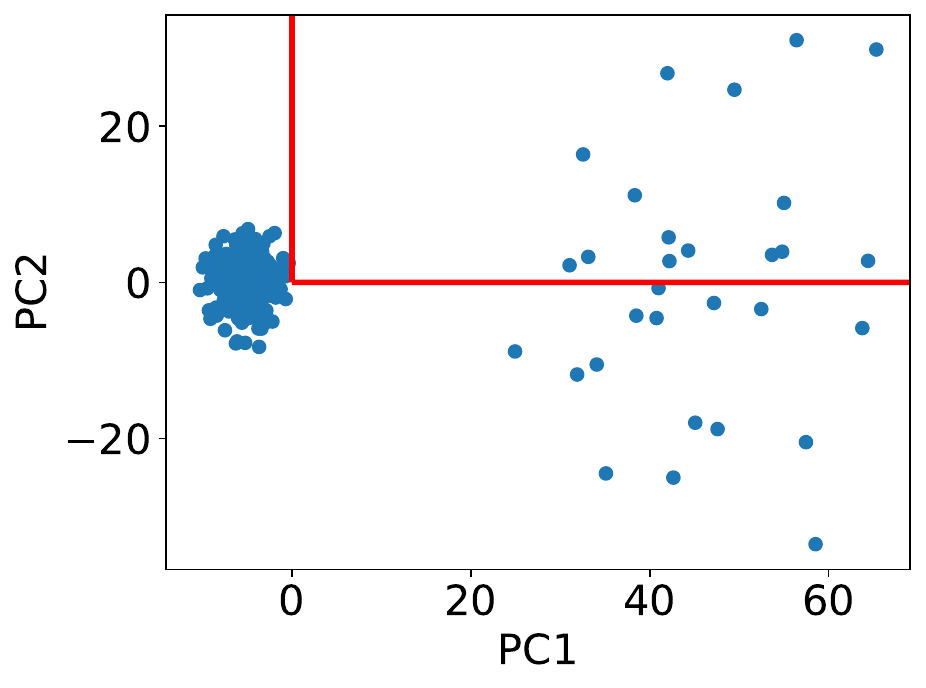}
        \caption{CPCA}
        \label{subfig:PCA}
    \end{subfigure}
    \begin{subfigure}[t]{0.24 \textwidth}
        \includegraphics[width=\textwidth]{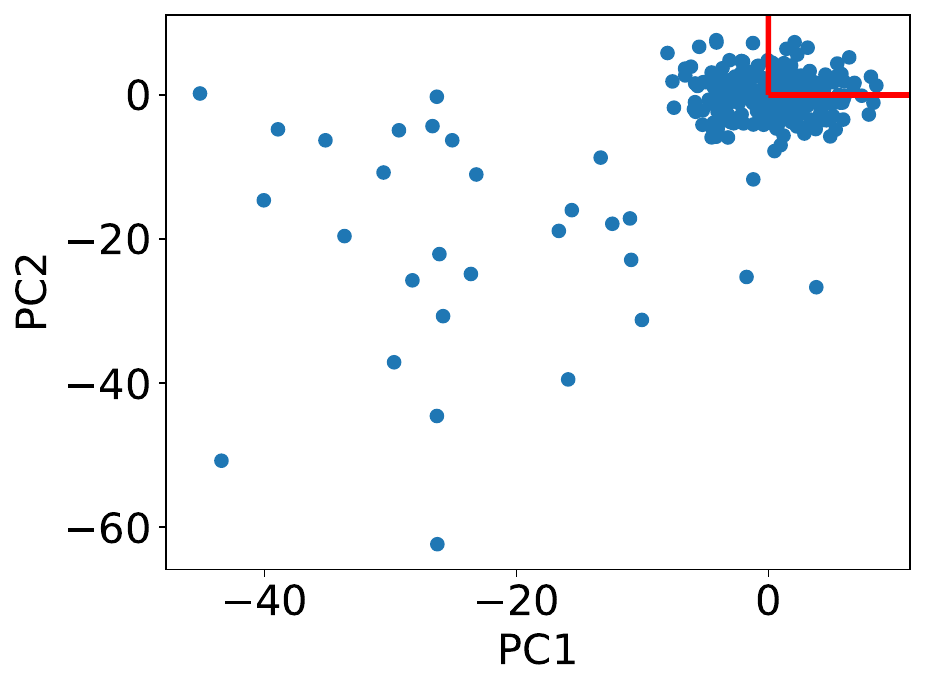}
        \caption{DetMCD-PCA}
        \label{subfig:robPCA}
    \end{subfigure}
    \begin{subfigure}[t]{0.24 \textwidth}
        \includegraphics[width=\textwidth]{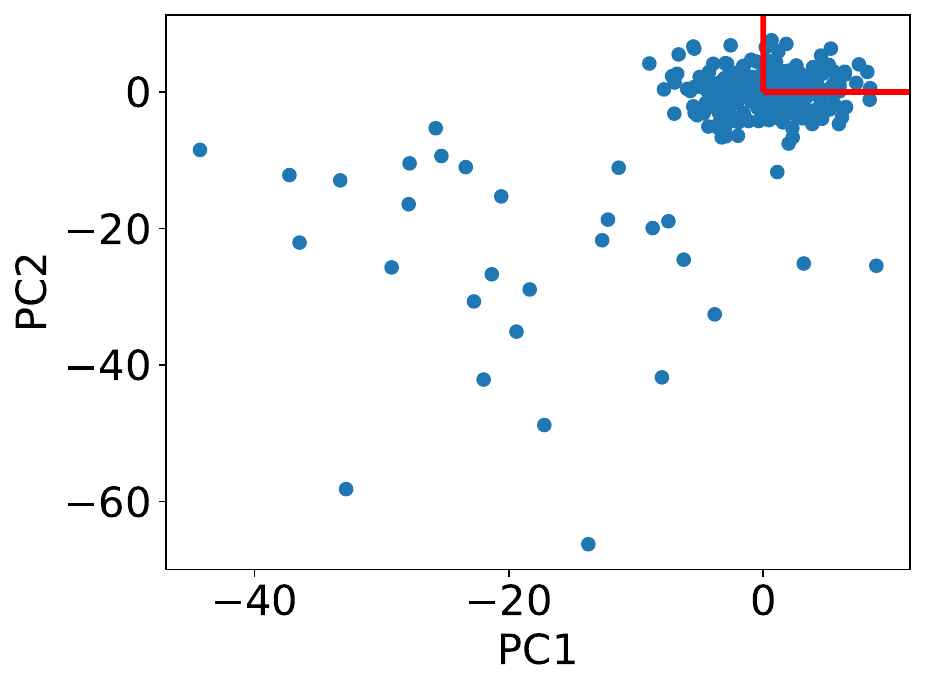}
        \caption{FDB-PCA}
        \label{subfig:FDB_PCA}
    \end{subfigure}
    \begin{subfigure}[t]{0.24 \textwidth}
        \includegraphics[width=\textwidth]{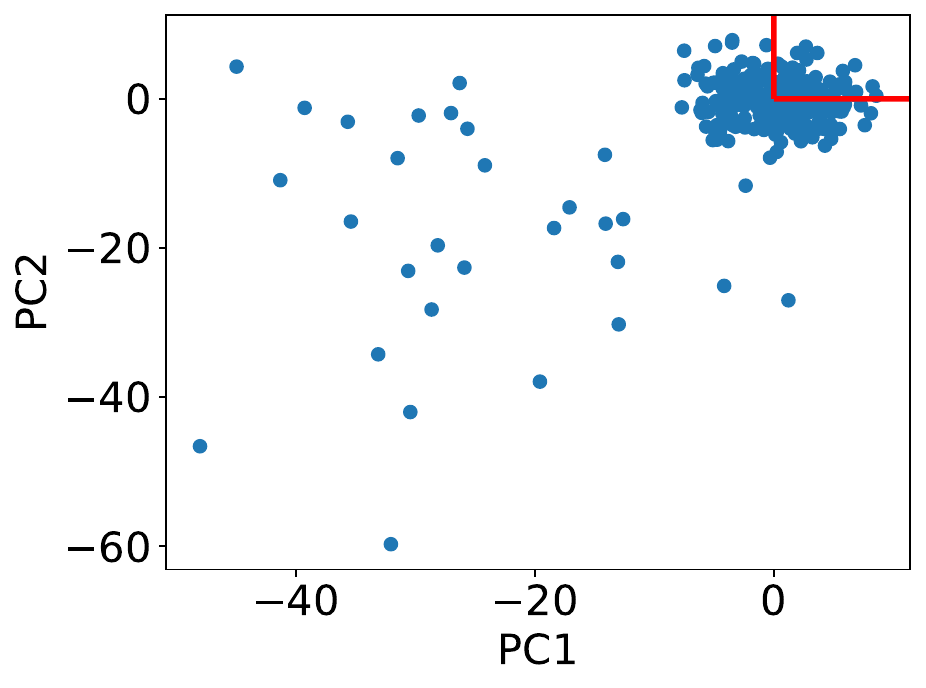}
        \caption{FIR-PCA}
        \label{subfig:FIR_PCA}
    \end{subfigure}
    \caption{CPCA, DetMCD-PCA, FDB-PCA, and FIR-PCA projections of $\mathbf{U V}$ with injected outliers. The red lines depict the magnitudes of variance in PC1 and PC2.}
    \label{fig:robust-PCA-comparison}
\end{figure}

\subsection{Performance Comparison}
\label{subsec:performance}
We utilize datasets of varying sizes to evaluate the computational runtime of the DetMCD, FDB, and FIR. The runtimes are measured in Python on a Dell Precision 5820 Tower workstation equipped with Intel Xeon W-2255 with core processing units of frequency 3.70 GHz, and a Nvidia RTX A4000 GPU. The results reported correspond to the average $100$ runs.

 \autoref{subfig:Time_n_A_2000_point_p5}, \autoref{subfig:Time_n_A_2000_point_p40}, and \autoref{subfig:Time_p_A_point_n1000} shows the runtimes of DetMCD, FDB, and FIR for $p=5$, $p=40$ and varying $p$ values, respectively. The DetMCD is significantly more expensive than both the FDB and FIR methods. The DetMCD requires concentration steps that are computationally expensive. The Bottom row provides a version of the top row without DetMCD to allow a better comparison of the FDB and FIR methods. Our proposed method requires slightly more time because it's based on calculating the projection depth, which is also used in FDB and the iterative steps described in \autoref{algo:fir-algo}. From our experiments, the FIR is at most $2\times$ slower than the FDB method. 

\autoref{tab:batch-size-effect} shows the execution time and accuracy for the FIR method with varying batch sizes. As the batch size increases, the performance increases. The batch size must be larger than $p$ and smaller than the number of inliers ($p < batch < \alpha n$). As the batch size gets closer to $h=\alpha n$ the accuracy deteriorates. The choice of batch size depends on the type of data and the computational resources. For the experiments in this paper, a batch size of approximately $10\%$ of the data leads to significant improvement in performance without degrading the accuracy. The appropriate batch size should maximize the performance while ensuring the desired accuracy. In cases where performance is no major concern, a conservative choice of batch size that is closer to $p$ yields more robust and accurate approximations. 
\begin{figure}[H]
    \centering
    \begin{subfigure}[t]{0.32 \textwidth}
        \includegraphics[width=0.7\textwidth]{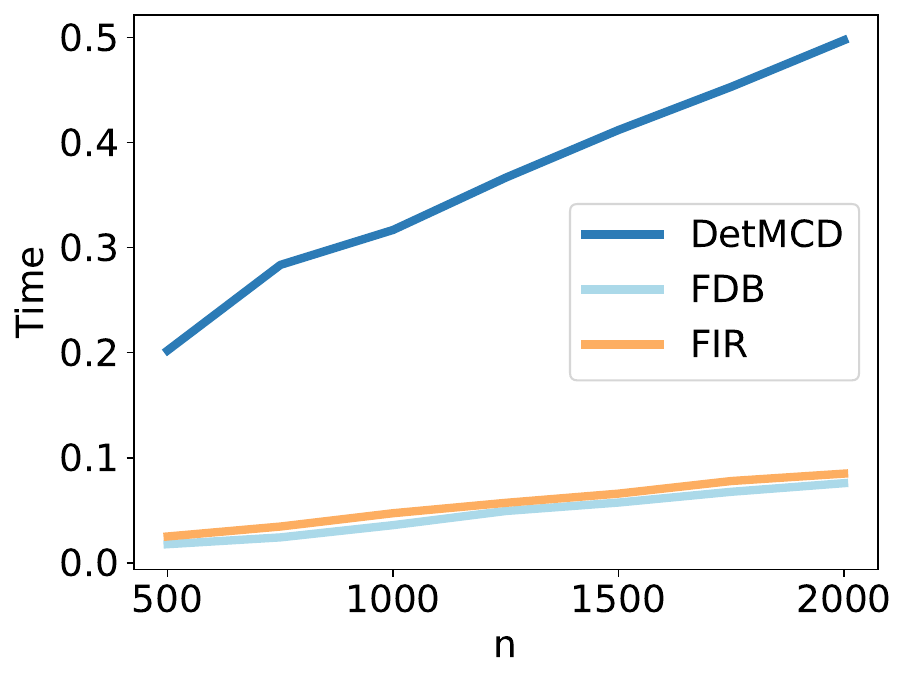}
        \label{fig:Time_n_all_A_2000_point_p5}
    \end{subfigure}
    \begin{subfigure}[t]{0.32 \textwidth}
        \includegraphics[width=0.7\textwidth]{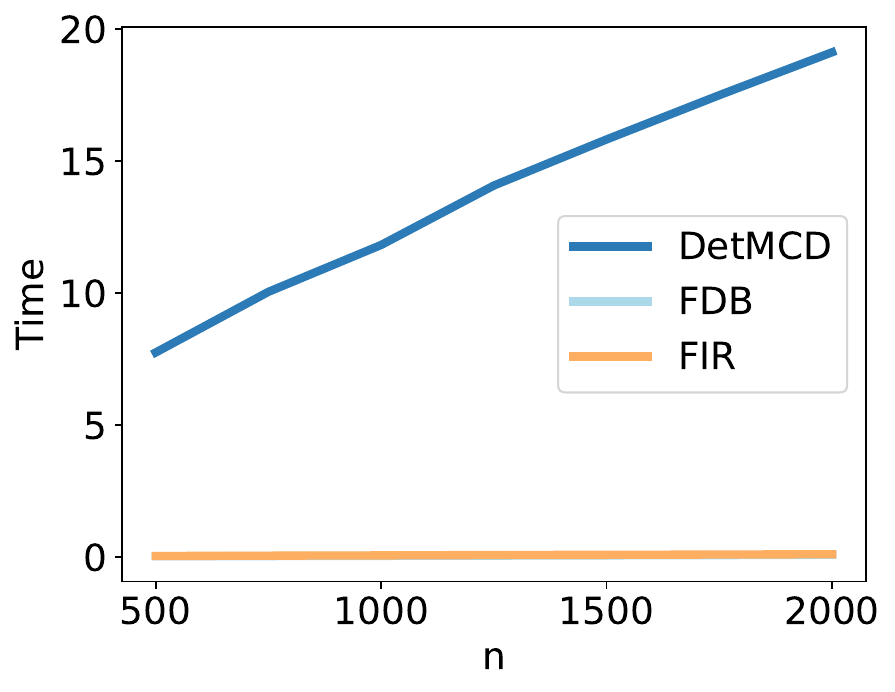}
        \label{fig:Time_n_all_A_2000_point_p5}
    \end{subfigure}
    \begin{subfigure}[t]{0.32 \textwidth}
        \includegraphics[width=0.7\textwidth]{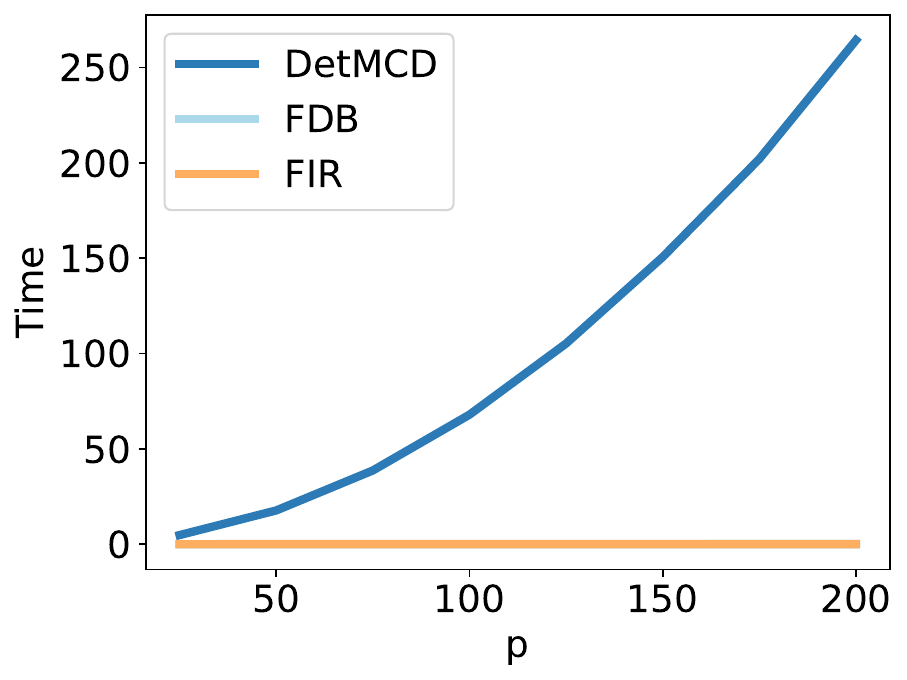}
        \label{subfig:Time_p_all_A_point_n1000}
    \end{subfigure}
    \begin{subfigure}[t]{0.32 \textwidth}
        \includegraphics[width=0.7\textwidth]{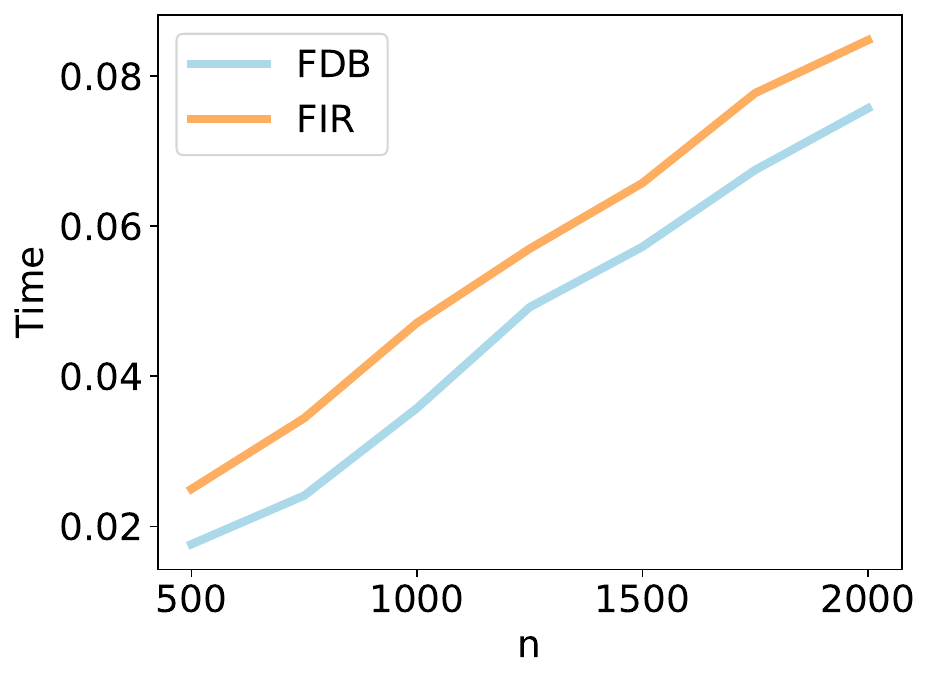}
        \caption{$p=5$}
        \label{subfig:Time_n_A_2000_point_p5}
    \end{subfigure}
    \begin{subfigure}[t]{0.32 \textwidth}
        \includegraphics[width=0.7\textwidth]{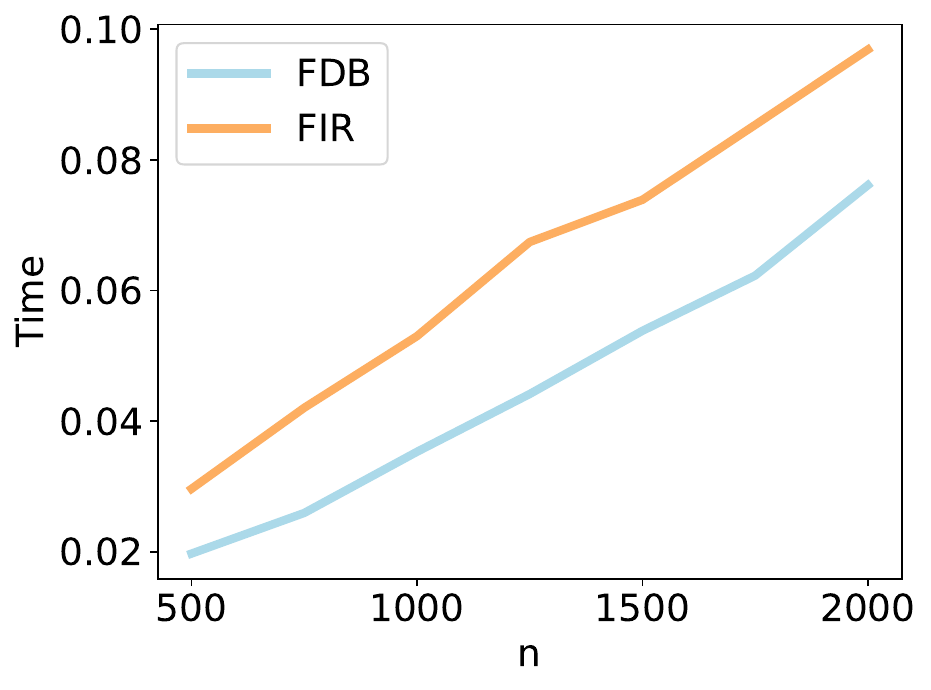}
        \caption{$p=40$}
        \label{subfig:Time_n_A_2000_point_p40}
    \end{subfigure}
    \begin{subfigure}[t]{0.32 \textwidth}
        \includegraphics[width=0.7\textwidth]{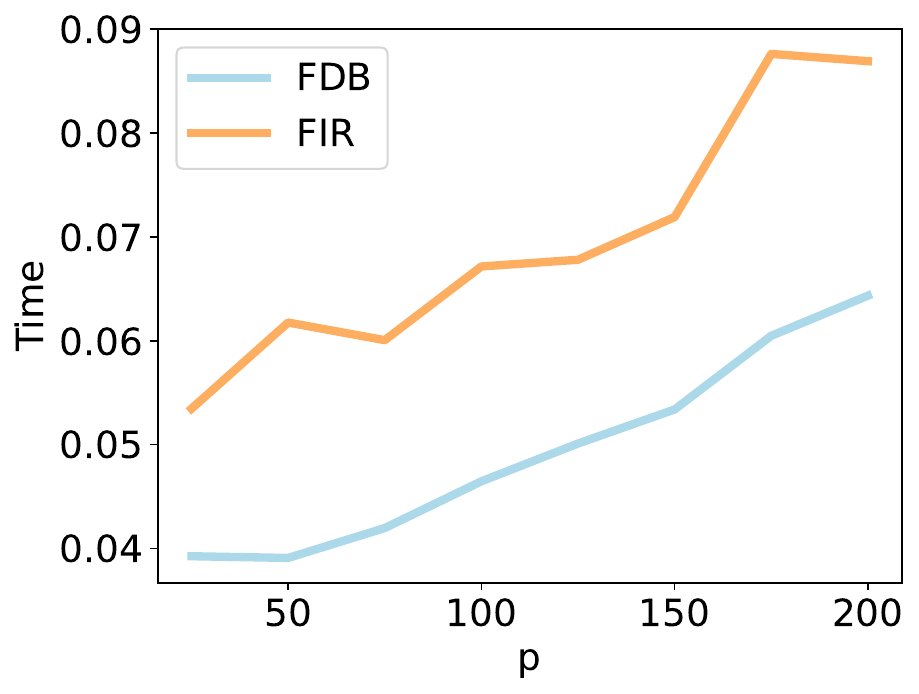}
        \caption{$n=1000$}
        \label{subfig:Time_p_A_point_n1000}
    \end{subfigure}
    \caption{Execution time in $s$ of DetMCD, FDB, and FIR for varying $n$ with $p=5, 40$, and varying $p$ with $n=1000$. The bottom row shows a visualization of the top row without DetMCD.}
    \label{fig:runtimes}
\end{figure}
\begin{table}[H]
    \caption{Effect of batch size on FIR performance and accuracy averaged over $1000$ sampled datasets with $10\%$ \textbf{Point} outliers.}
    \label{tab:batch-size-effect}
    \centering
    \scriptsize
    \begin{tabular}{lcccccccccl}
    \toprule
& \multicolumn{2}{c}{ 200 }& \multicolumn{2}{c}{ 400 }& \multicolumn{2}{c}{ 800 }& \multicolumn{2}{c}{ 1600 }& \multicolumn{2}{c}{ 3200 }\\
    \cmidrule(r){2-3}\cmidrule(r){4-5}\cmidrule(r){6-7}\cmidrule(r){8-9}\cmidrule{10-11}
batch size  & $t$ & $e_{\mu}$  & $t$ & $e_{\mu}$  & $t$ & $e_{\mu}$  & $t$ & $e_{\mu}$  & $t$ & $e_{\mu}$ \\
10 & 0.021 & 0.135 & 0.041 & 0.297 & 0.081 & 0.450 & 0.167 & 0.407 & 0.366 & 0.094\\
40 & 0.013 & 0.221 & 0.025 & 0.349 & 0.045 & 0.423 & 0.087 & 0.413 & 0.177 & 0.097\\
60 & 0.012 & 0.192 & 0.022 & 0.242 & 0.041 & 0.389 & 0.078 & 0.412 & 0.156 & 0.098\\
120 & 0.011 & 0.232 & 0.021 & 0.321 & 0.037 & 0.506 & 0.069 & 0.357 & 0.134 & 0.087\\
240 & -- & -- & 0.020 & 0.505 & 0.035 & 0.437 & 0.065 & 0.415 & 0.124 & 0.106\\
300 & -- & -- & 0.019 & 0.058 & 0.035 & 0.217 & 0.064 & 0.337 & 0.122 & 0.098\\
\bottomrule
\end{tabular}
\end{table}

\section{Real Data Examples}
\label{sec:results-real-world}

\subsection{Octane Dataset}
\label{subsec:octane}
The octane dataset, used in ~\cite{hubert_robpca_2005} has near-infrared (NIR) absorbance spectra over $p = 226$ wavelengths of $n =39$ gasoline samples with specific octane numbers. The six samples $25$, $26$, and $36–39$ are known to contain added alcohol. Although $p=226$, the first two principal components are sufficient to describe more than $80\%$ of the data. Following the approach in ~\cite{hubert_robpca_2005}, the horizontal and vertical cut-off values, shown by the red dashed lines, are calculated according to $\sqrt{\chi^{2}_{q, .975}}$, $\mu + \sigma z_{.975}$ with $z_{.975}$ as the $97.5\%$ quantile of the Gaussian distribution, $\chi^{2}$ denotes the chi-squared distribution, $q$ the retained number of components. The execution of CPCA, DetMCD-PCA, FDB-PCA, and FIR-PCA is $1.55$ ms, $56.57$ ms, $3.90$ ms, and $5.03$ ms, respectively. The red dots in \autoref{fig:octane} indicate outliers that are well separated from the inliers and greater than the cut-off values. The six samples are not easily identified with the CPCA outlier map, as shown in the first column of \autoref{fig:octane}. However, the DetMCD, FDB-PCA, and FIR-PCA separate these outliers as demonstrated in the outlier maps and scores visualization in \autoref{fig:octane}.
\begin{figure}[H]
    \centering
    \begin{subfigure}[t]{0.24 \textwidth}
        \includegraphics[width=\textwidth]{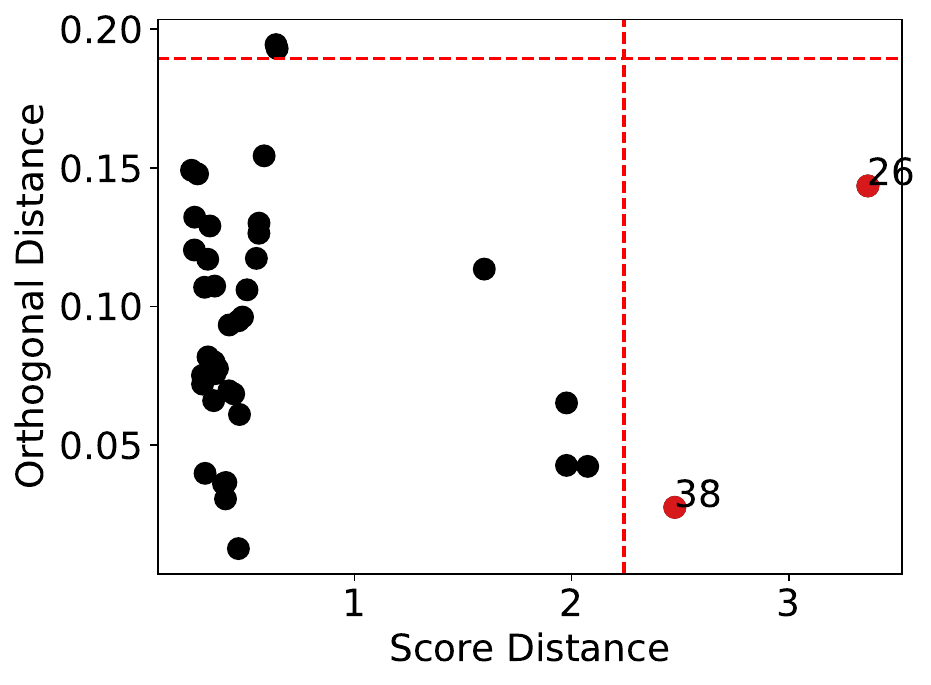}
        \caption{CPCA}
        \label{subfig:octane_C_PCA_outlier_map}
    \end{subfigure}
    \begin{subfigure}[t]{0.24 \textwidth}
        \includegraphics[width=\textwidth]{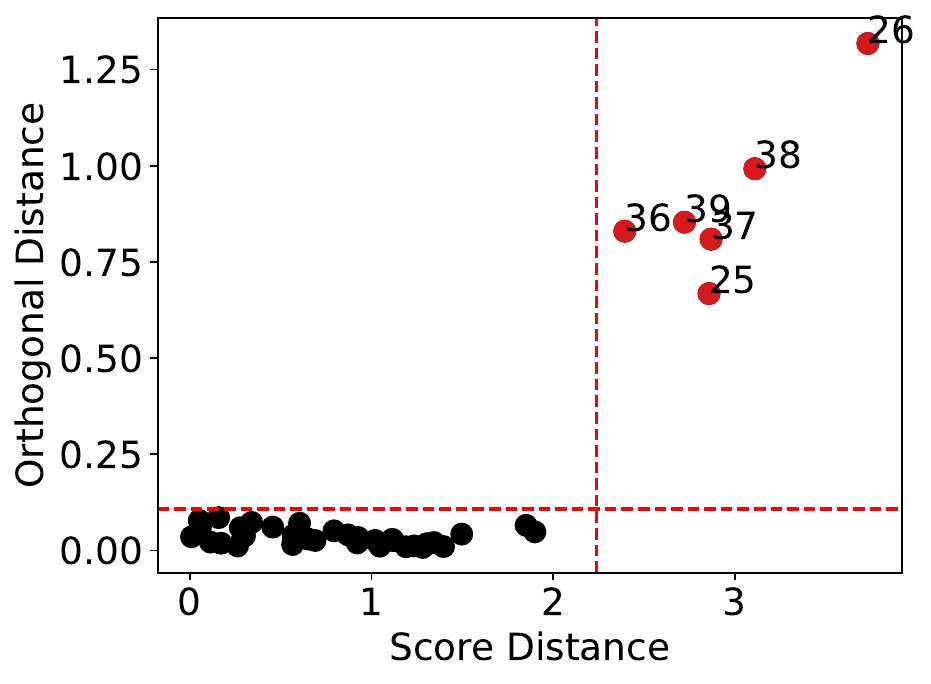}
        \caption{DetMCD-PCA}
        \label{subfig:octane_DetMCD_PCA_outlier_map}
    \end{subfigure}
    \begin{subfigure}[t]{0.24 \textwidth}
        \includegraphics[width=\textwidth]{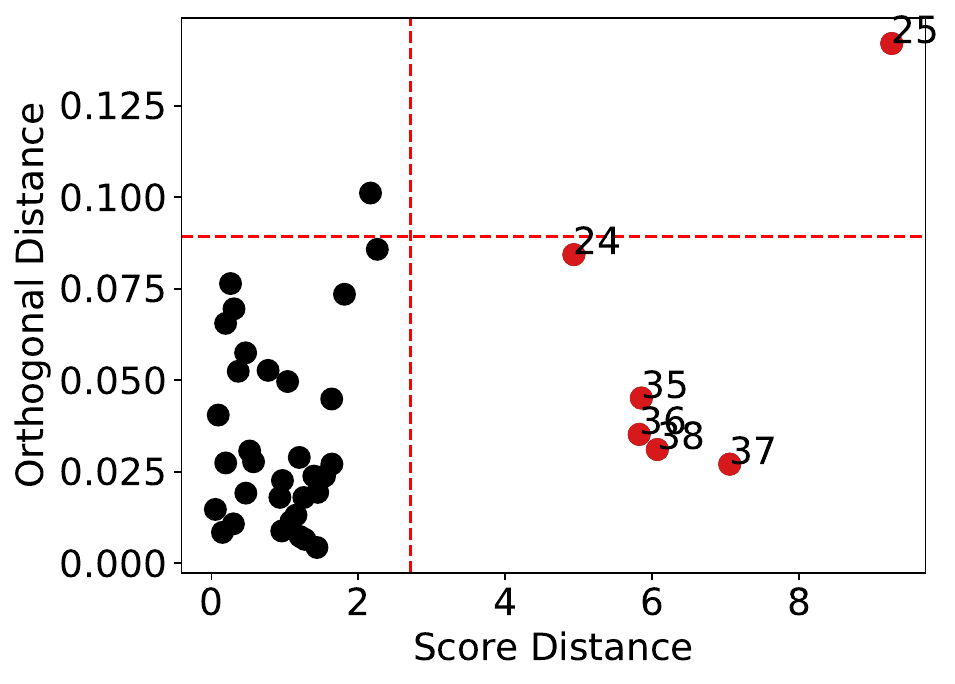}
        \caption{FDB-PCA}
        \label{subfig:octane_FDB_PCA_outlier_map}
    \end{subfigure}
    \begin{subfigure}[t]{0.24 \textwidth}
        \includegraphics[width=\textwidth]{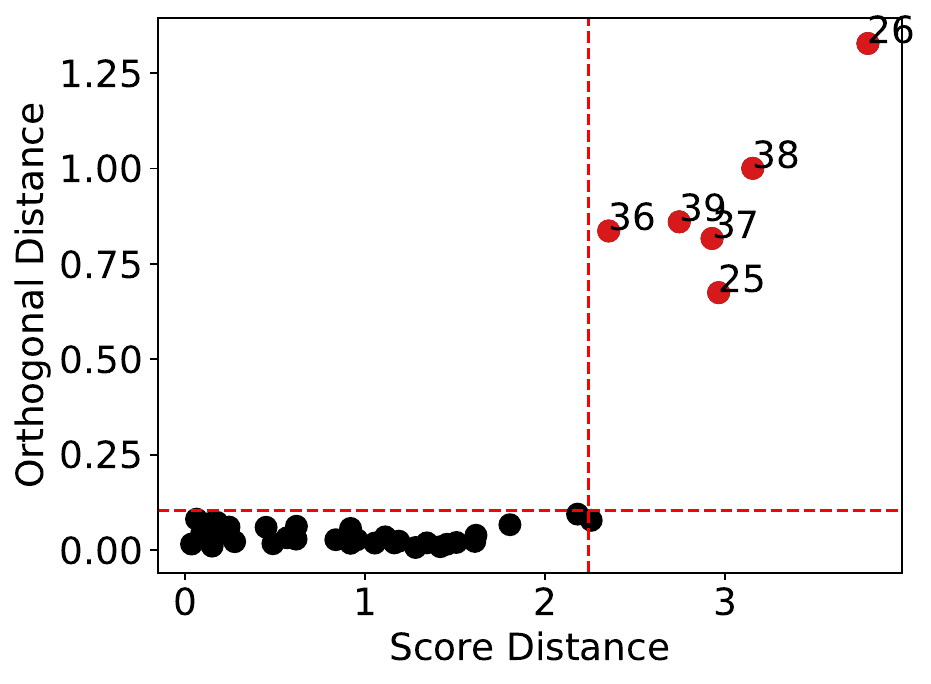}
        \caption{FIR-PCA}
        \label{subfig:octane_FIR_PCA_outlier_map}
    \end{subfigure}
    \caption{Score (top row) and outlier maps (bottom row) for CPCA, DetMCD-PCA, FIR-PCA, and FIR-PCA. The red points correspond to the outliers identified by each method. Dashed lines represent the outlier cutoff reported by the methods.}
    \label{fig:octane}
\end{figure}

\subsection{Forged Bank Notes}
This example, taken from ~\citep{Flury1988} has $p=6$ measurements and $n=100$ forged Swiss banknotes and prior studies have shown that this dataset has several outliers and correlated variables ~\citep{Willems01012009, Salibián-Barrera01092006,Hubert01072012}. The first two principal components explain more than $80\%$ of the data. The execution of CPCA, DetMCD-PCA, FDB-PCA, and FIR-PCA are $5.67$ ms, $115.57$ ms, $6.17$ ms, and $11.43.03$ ms, respectively. Here, we consider outliers $13$, $23$, $61$, $71$, $80$ and $87$ identified in previous studies \citep{Hubert01072012,Zhang02012024}. The diagnostic plot from CPCA considers $13$ and $87$ to be inliers, as shown in the bottom plot of \autoref{subfig:ForgedBankNotes_C_PCA_outlier_map}. The remaining approaches provide a comparable analysis of the data and successfully identify the set of outliers considered.
\begin{figure}[H]
    \centering
    \begin{subfigure}[t]{0.24 \textwidth}
        \includegraphics[width=\textwidth]{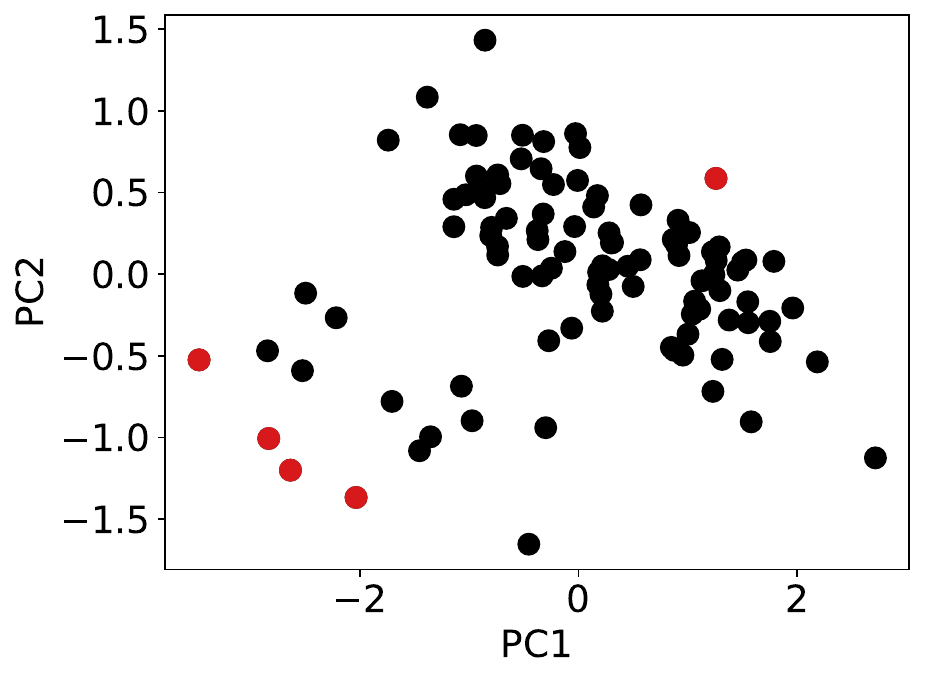}
        \label{subfig:ForgedBankNotes_C_PCA_first_two_principal_components}
    \end{subfigure}
    \begin{subfigure}[t]{0.24 \textwidth}
        \includegraphics[width=\textwidth]{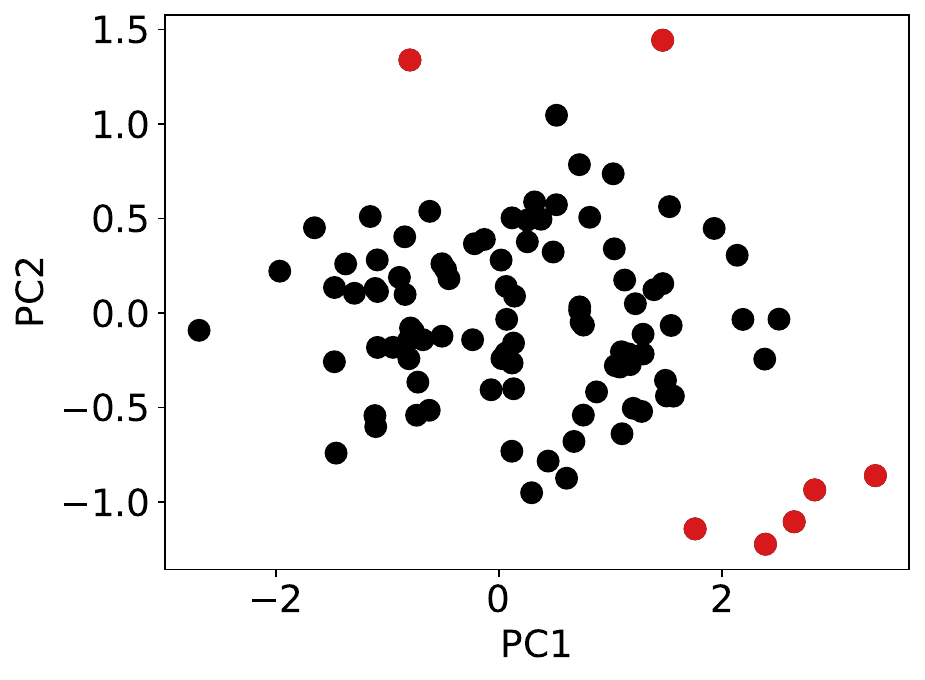}
        \label{subfig:ForgedBankNotes_DetMCD_PCA_first_two_principal_components}
    \end{subfigure}
    \begin{subfigure}[t]{0.24 \textwidth}
        \includegraphics[width=\textwidth]{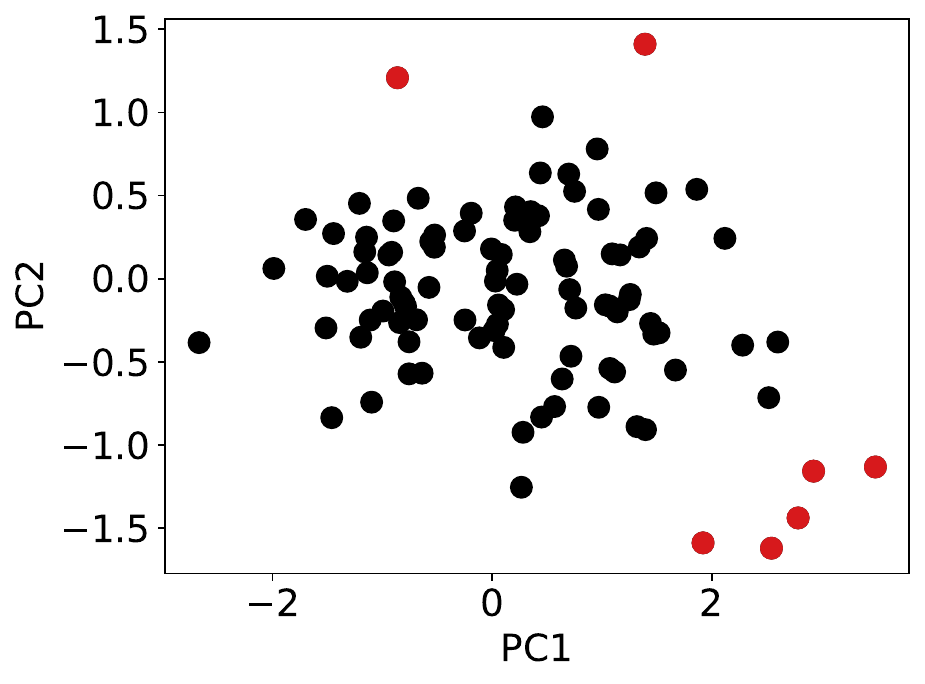}
        \label{subfig:ForgedBankNotes_FDB_PCA_first_two_principal_components}
    \end{subfigure}
    \begin{subfigure}[t]{0.24 \textwidth}
        \includegraphics[width=\textwidth]{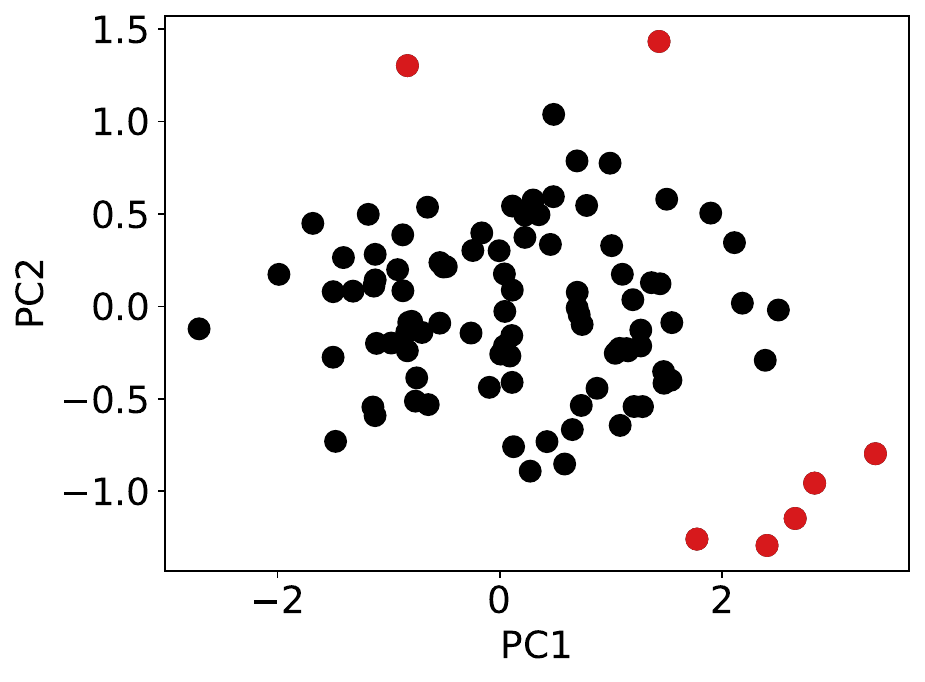}
        \label{subfig:ForgedBankNotes_FIR_PCA_first_two_principal_components}
    \end{subfigure}
    \begin{subfigure}[t]{0.24 \textwidth}
        \includegraphics[width=\textwidth]{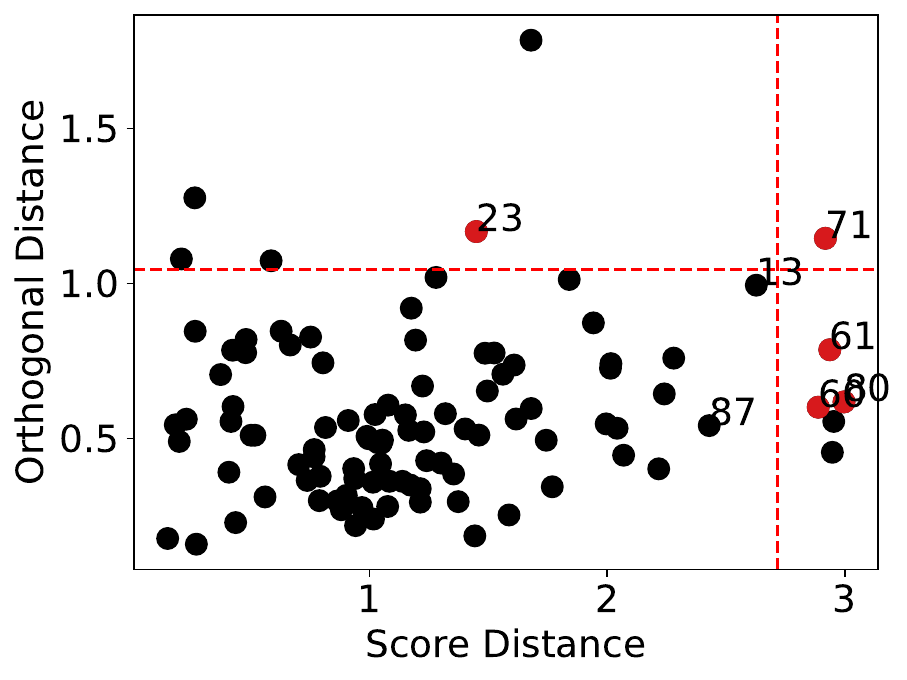}
        \caption{CPCA}
        \label{subfig:ForgedBankNotes_C_PCA_outlier_map}
    \end{subfigure}
    \begin{subfigure}[t]{0.24 \textwidth}
        \includegraphics[width=\textwidth]{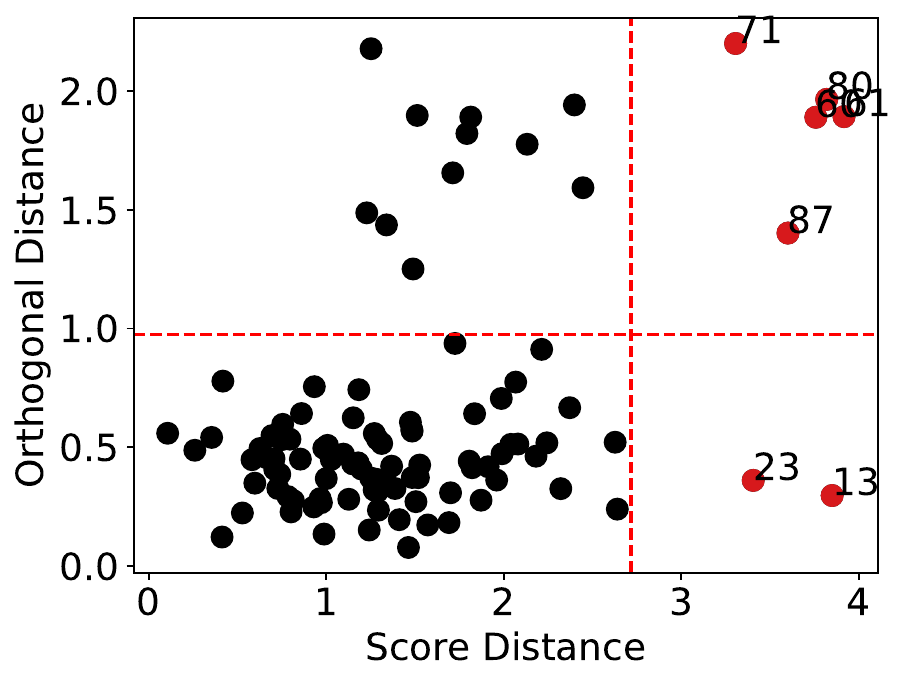}
        \caption{DetMCD-PCA}
        \label{subfig:ForgedBankNotes_DetMCD_PCA_outlier_map}
    \end{subfigure}
    \begin{subfigure}[t]{0.24 \textwidth}
        \includegraphics[width=\textwidth]{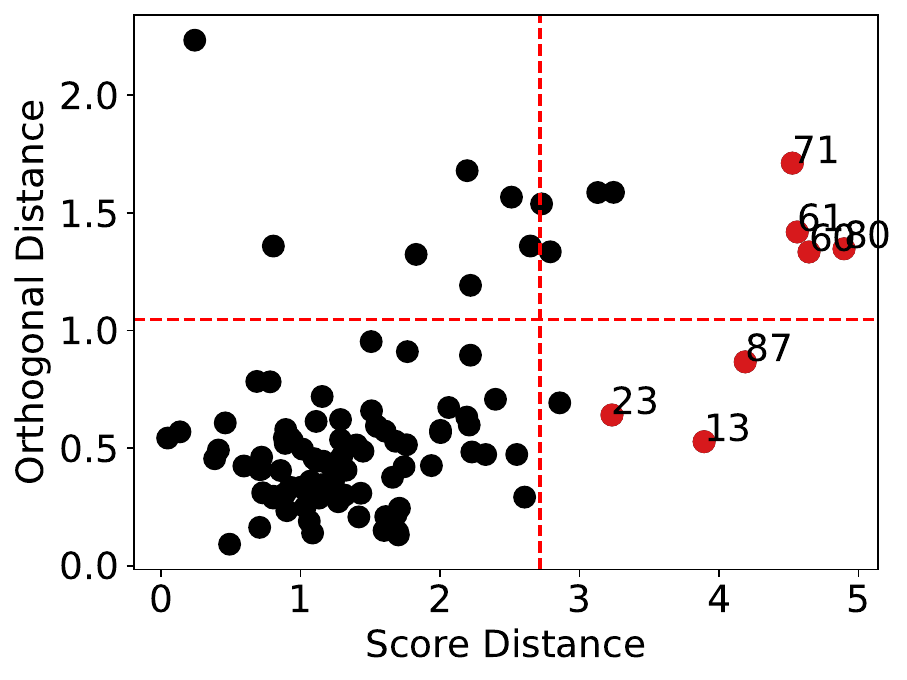}
        \caption{FDB-PCA}
        \label{subfig:ForgedBankNotes_FDB_PCA_outlier_map}
    \end{subfigure}
    \begin{subfigure}[t]{0.24 \textwidth}
        \includegraphics[width=\textwidth]{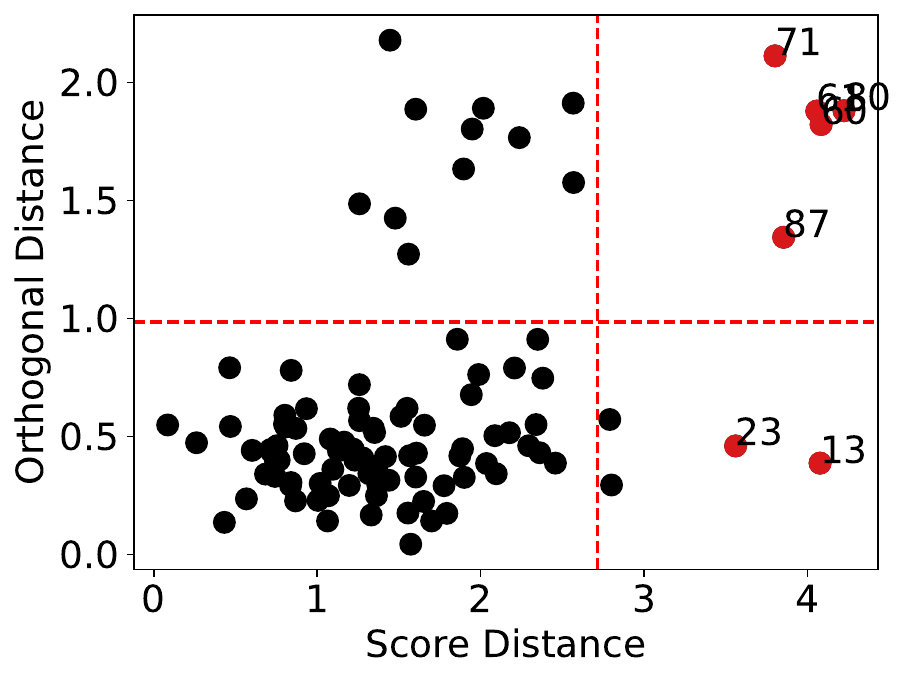}
        \caption{FIR-PCA}
        \label{subfig:ForgedBankNotes_FIR_PCA_outlier_map}
    \end{subfigure}
    \caption{Score (top row) and outlier maps (bottom row) for CPCA, DetcMCD-PCA, FIR-PCA, and FIR-PCA.The red points correspond to the outliers detected by each method. Dashed lines represent the outlier cutoff reported by the methods.}
    \label{fig:ForgedBankNotes}
\end{figure}

\subsection{TopGear}
\label{subsec:TipGear}
The TopGear dataset obtained from ~\cite{leyder2024robpy} contains different cars with performance information that were featured in the BBC television show Top Gear until 2014. We preprocess the original data as in ~\cite{leyder2024robpy} to remove non-numerical columns and rows, and columns with zero variance. The processed data has $p=10$ features and $n=245$ observations. The execution of CPCA, DetMCD-PCA, FDB-PCA, and FIR-PCA are $4.25$ ms, $403.92$ ms, $15.23$ ms, and $25.44$ ms, respectively. The top six outliers, marked by red dots, are detected by robust PCA methods, whereas the CPCA does not identify all six points as outliers, as shown by the outlier maps in the bottom row of \autoref{fig:topgear}. In \autoref{subfig:topgear_C_PCA_outlier_map}, $117$ is closer to the distribution, while in the remaining robust PCA outlier maps, $117$ is further away from the distribution. In this example, the points $53$ and $219$, shown in orange, have small score distances and large orthogonal distances. CPCA is sensitive to these outliers, as shown in \autoref{subfig:topgear_C_PCA_outlier_map}, where it incorrectly classifies the score distances as large and the orthogonal distances small.
\begin{figure}[H]
    \centering
    \begin{subfigure}[t]{0.24 \textwidth}
        \includegraphics[width=\textwidth]{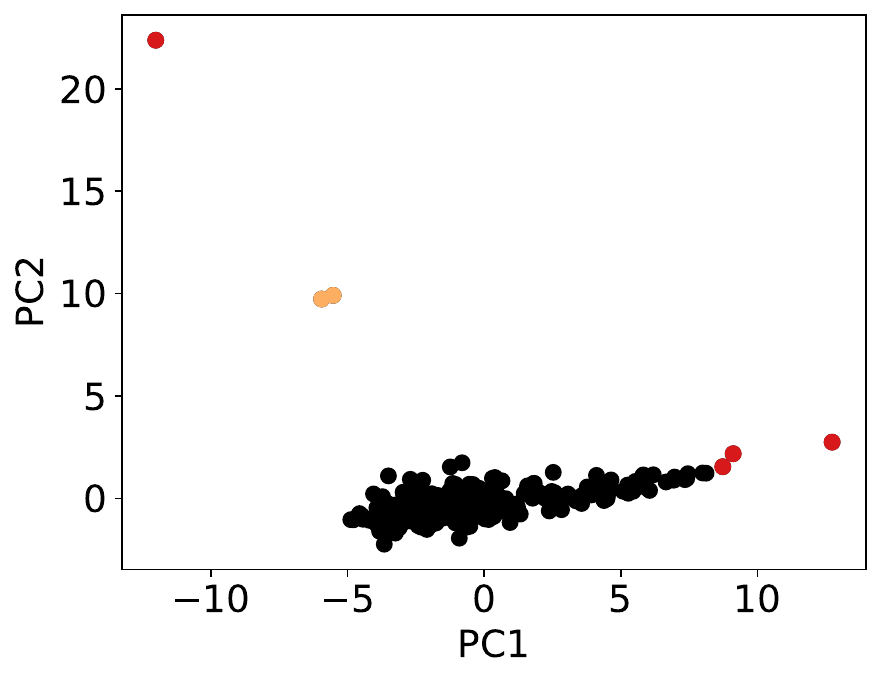}
        \label{subfig:topgear_C_PCA_first_two_principal_components}
    \end{subfigure}
    \begin{subfigure}[t]{0.24 \textwidth}
        \includegraphics[width=\textwidth]{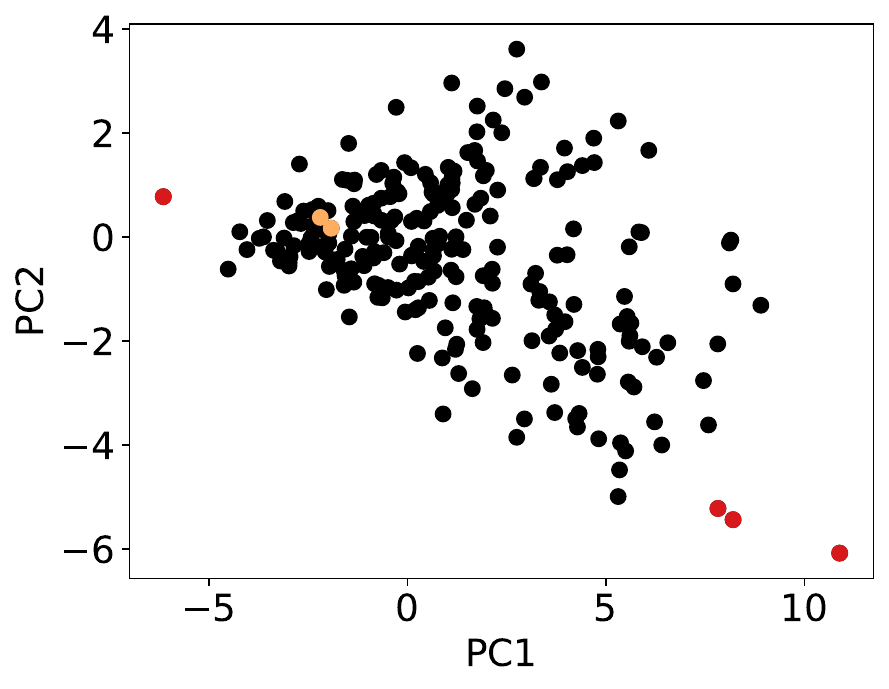}
        \label{subfig:topgear_DetMCD_PCA_first_two_principal_components}
    \end{subfigure}
    \begin{subfigure}[t]{0.24 \textwidth}
        \includegraphics[width=\textwidth]{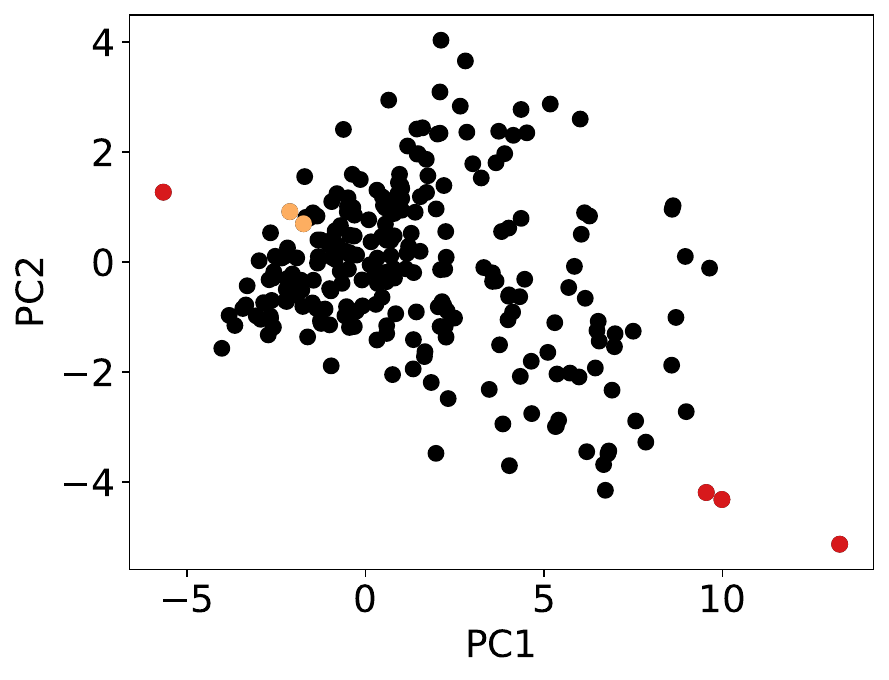}
        \label{subfig:topgear_FDB_PCA_first_two_principal_components}
    \end{subfigure}
    \begin{subfigure}[t]{0.24 \textwidth}
        \includegraphics[width=\textwidth]{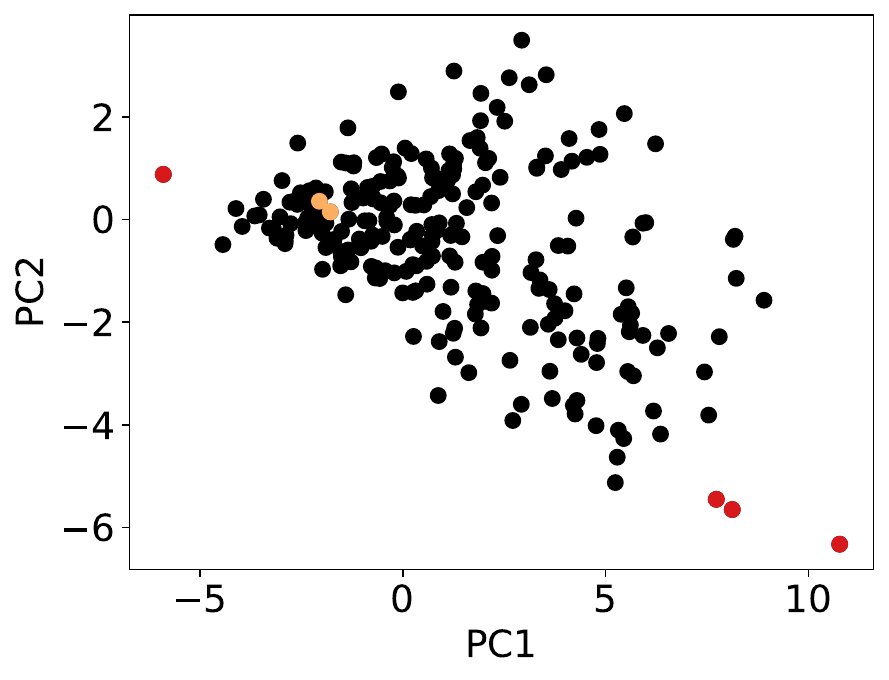}
        \label{subfig:topgear_FIR_PCA_first_two_principal_components}
    \end{subfigure}
    \begin{subfigure}[t]{0.24 \textwidth}
        \includegraphics[width=\textwidth]{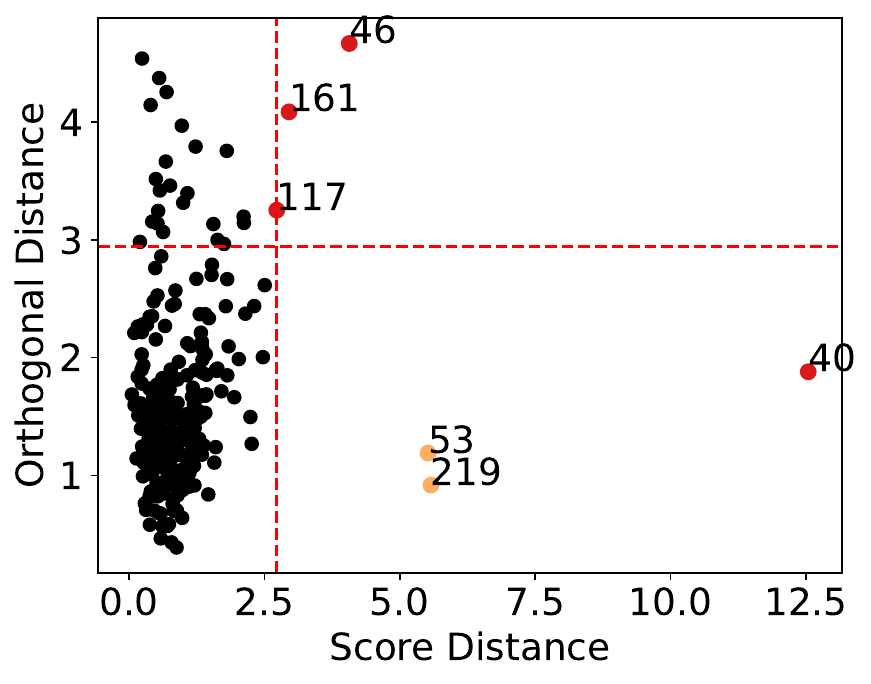}
        \caption{CPCA}
        \label{subfig:topgear_C_PCA_outlier_map}
    \end{subfigure}
    \begin{subfigure}[t]{0.24 \textwidth}
        \includegraphics[width=\textwidth]{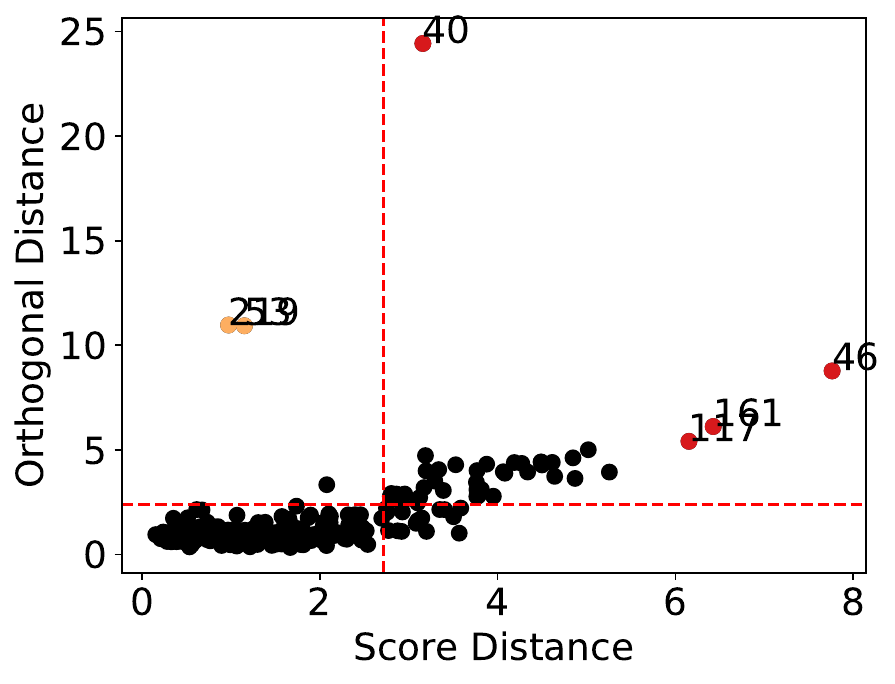}
        \caption{DetMCD-PCA}
        \label{subfig:topgear_DetMCD_PCA_outlier_map}
    \end{subfigure}
    \begin{subfigure}[t]{0.24 \textwidth}
        \includegraphics[width=\textwidth]{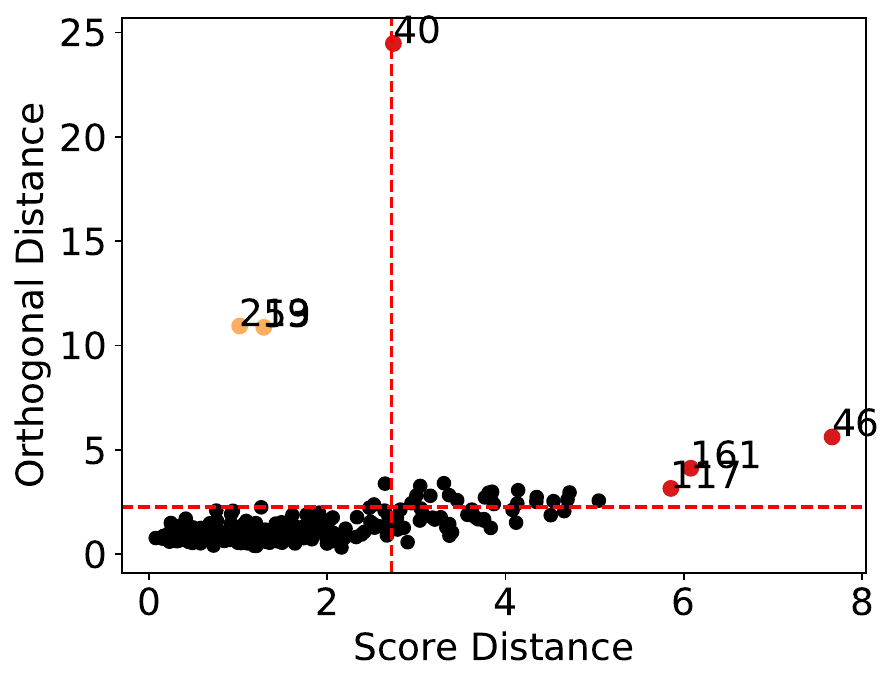}
        \caption{FDB-PCA}
        \label{subfig:topgear_FDB_PCA_outlier_map}
    \end{subfigure}
    \begin{subfigure}[t]{0.24 \textwidth}
        \includegraphics[width=\textwidth]{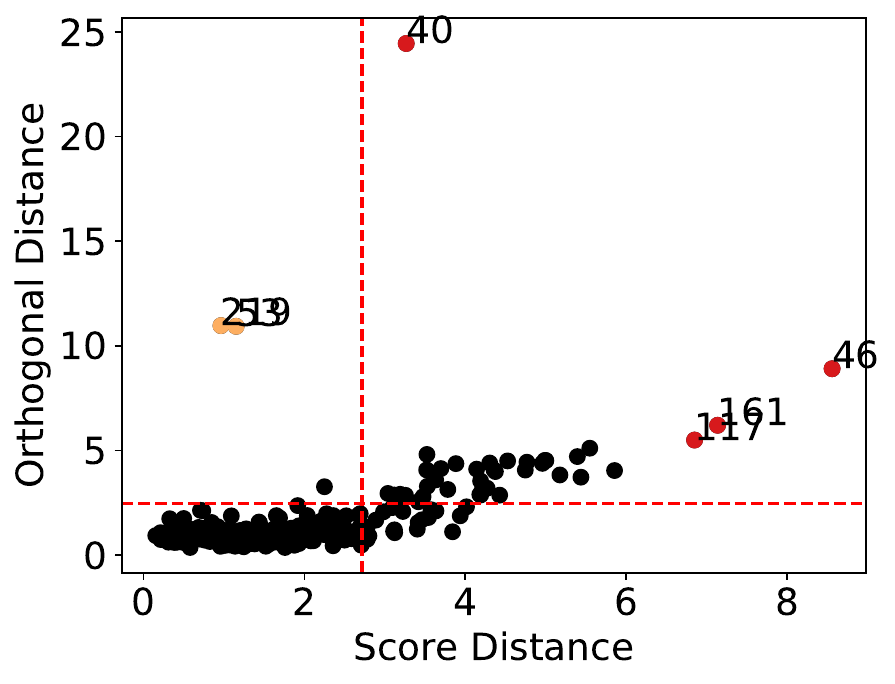}
        \caption{FIR-PCA}
        \label{subfig:topgear_FIR_PCA_outlier_map}
    \end{subfigure}
    \caption{Score (top row) and outlier maps (bottom row) for CPCA, DetMCD-PCA, FIR-PCA, and FIR-PCA. The red and orange points correspond to outliers considered in this example.}
    \label{fig:topgear}
\end{figure}

\subsection{Philips Data}
\label{subsec:TipGear}

The Philips dataset from ~\cite{Willems01012009} consists of $p=9$ characteristics of $n=677$ diaphragm parts for TV sets. We preprocessed the data by subtracting the mean from each feature column and normalizing it by its standard deviation. ~\cite{rousseeuw_fast_1999} showed that the data have clustered outliers and the points $491-566$ correspond to one of the clustered outliers. 

\autoref{fig:TVsets} shows the outlier map for the different methods with the outliers $491-566$ indicated in red. The execution time for CPCA, DetMCD-PCA, FDB-PCA, and FIRE are $10.90$ ms, $692.24$ ms, $35.22$ ms, and $43.96$ ms, respectively. FIR-PCA effectively separates the outliers(shown in red) and reveals three clusters found in ~\citep{Willems01012009} while the faster methods failed to separate the outliers (CPCA, FDB-PCA) and DetMCD-PCA only found two clusters. Overall, FIR-PCA projected the data faithful to its nature while only imposing a negligible amount of overhead compared to FDB-PCA, exhibiting great efficiency and robustness.
\begin{figure}[H]
    \centering
    \begin{subfigure}[t]{0.24 \textwidth}
        \includegraphics[width=\textwidth]{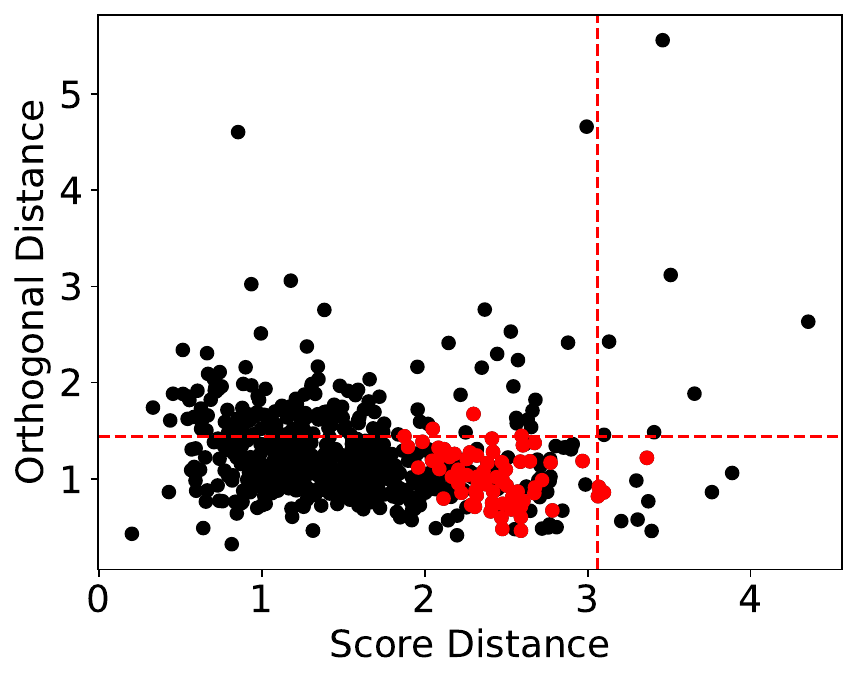}
        \caption{CPCA}
        \label{subfig:TVsets_C_PCA_outlier_map}
    \end{subfigure}
    \begin{subfigure}[t]{0.24 \textwidth}
        \includegraphics[width=\textwidth]{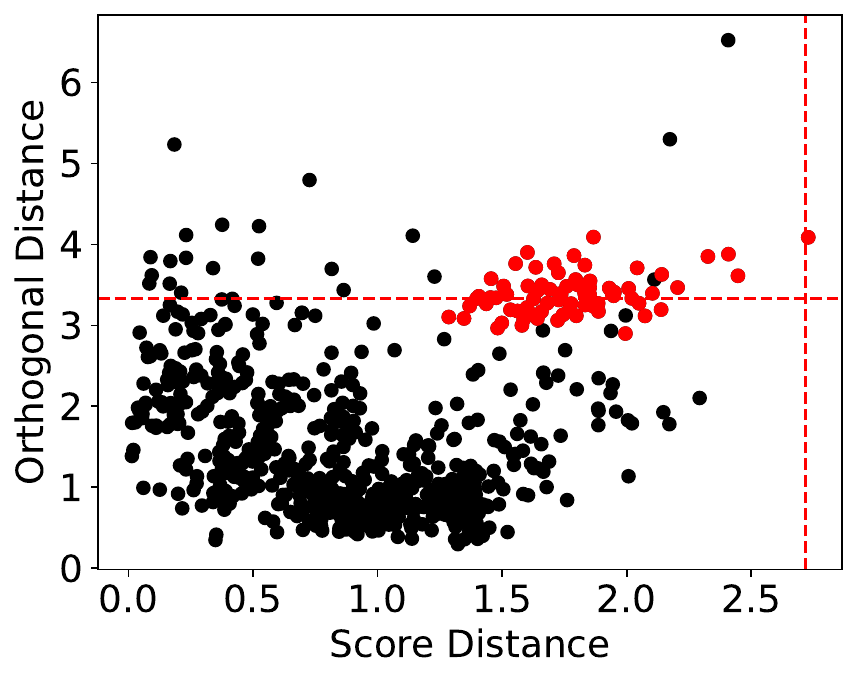}
        \caption{DetMCD-PCA}
        \label{subfig:TVsets_DetMCD_PCA_outlier_map}
    \end{subfigure}
    \begin{subfigure}[t]{0.24 \textwidth}
        \includegraphics[width=\textwidth]{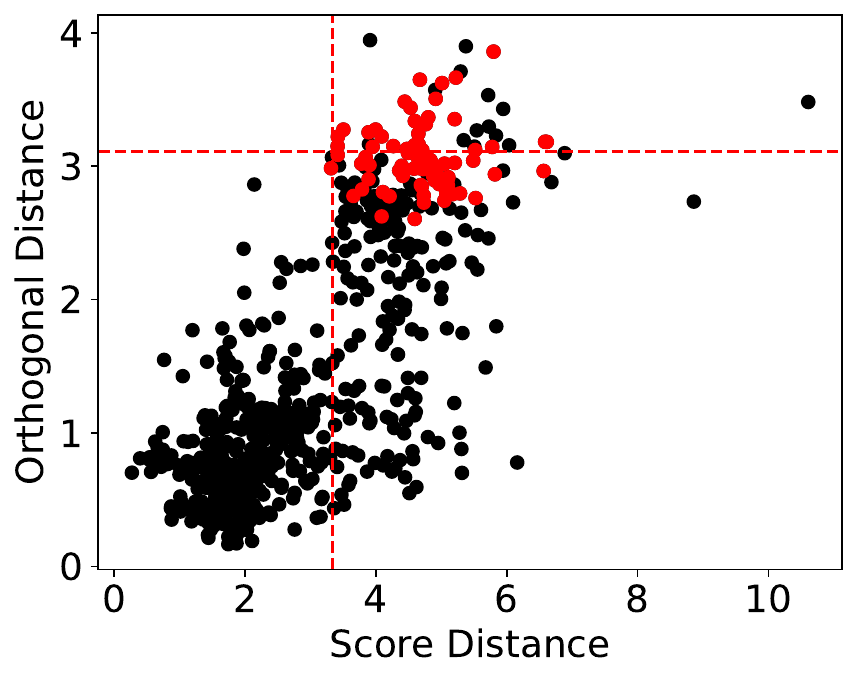}
        \caption{FDB-PCA}
        \label{subfig:TVsets_FDB_PCA_outlier_map}
    \end{subfigure}
    \begin{subfigure}[t]{0.24 \textwidth}
        \includegraphics[width=\textwidth]{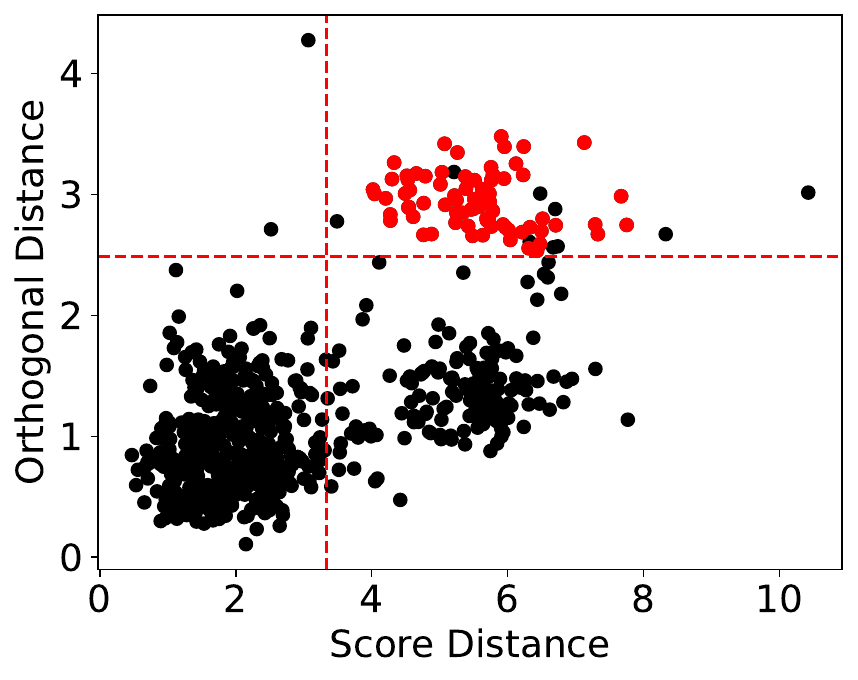}
        \caption{FIR-PCA}
        \label{subfig:TVsets_FIR_PCA_outlier_map}
    \end{subfigure}
    \caption{Outlier maps for CPCA, DetMCD-PCA, FIR-PCA, and FIR-PCA. The red points are previously identified outliers.}
    \label{fig:TVsets}
\end{figure}

\section{Conclusion}
\label{sec:conclusion}
We introduced FIR, a novel robust location and covariance estimation method, FIR-PCA, a robust PCA based on FIR, that efficiently handles outlier-contaminated data while maintaining computational efficiency. Our approach builds upon incremental PCA and employs a fast iterative selection mechanism to construct a robust estimate of location and covariance that is orthogonal equivariant and permutation invariant. Several examples demonstrate that the FIR method is more resilient to outliers than classical PCA and robust methods such as DetMCD and FDB. The FIR approach is not suitable for datasets where the number of features $p$ exceeds the number of observations $n$. In addition, the FIR method does not handle cases where corrupted outliers may contain partial inliers for some of the features. As this work continues, we plan to explore and develop techniques for handling cases where $p>n$, missing data, and partial outliers. In addition, we aim to extend our current method to higher-order tensors, enabling robust statistical analysis of more complex datasets beyond matrices.







\bibliographystyle{agsm}

\bibliography{references}
\end{document}